\newtheorem{theorem}{Theorem}
\newtheorem{lemma}{Lemma}
\newtheorem{defn}{Definition}
\newtheorem{remark}{Remark}
\newcommand\norm[1]{\left\lVert#1\right\rVert}
\begin{document}

\title{Robust Attitude Tracking for Aerobatic Helicopters: A Geometric Approach}

\author{Nidhish~Raj,~%
        Ravi~N~Banavar,~%
        Abhishek,~%
        and~Mangal~Kothari% <-this % stops a space
\thanks{This work was funded by the Department of Science and Technology, India.}
\thanks{Nidhish Raj, Abhishek, and Mangal Kothari are with the Department of Aerospace Engineering, Indian Institute of Technology Kanpur, UP, 208016 India e-mail: (nraj,abhish,mangal)@iitk.ac.in.}%
\thanks{Ravi N Banavar is with the Systems and Control Engineering, Indian Institute of Technology Bombay, Mumbai, India e-mail: banavar@iitb.ac.in.}% 
}% 
%\thanks{Manuscript received April 19, 2005; revised August 26, 2015.}}

% The paper headers
%\markboth{Journal of \LaTeX\ Class Files,~Vol.~14, No.~8, August~2015}%
%{Shell \MakeLowercase{\textit{et al.}}: Bare Demo of IEEEtran.cls for IEEE Journals}

% make the title area
\maketitle

\begin{abstract}
This paper highlights the significance of the rotor dynamics in control design for small-scale aerobatic helicopters, and proposes two singularity free robust attitude tracking controllers based on the available states for feedback. 
1. The first, employs the angular velocity and the flap angle states 
(a variable that is not easy to measure)
and uses a backstepping technique to design a robust compensator (BRC) to \textbf{\textit{actively}} suppress the disturbance induced tracking error. 2. 
The second exploits the inherent damping present in the helicopter dynamics leading to a structure preserving, \textbf{\textit{passively}} robust controller (SPR), which is free of angular velocity and flap angle feedback.
The BRC controller is designed to be robust in the presence of two types of uncertainties: structured and unstructured. The structured disturbance is due to uncertainty in the rotor parameters, and the unstructured perturbation is modeled as an exogenous torque acting on the fuselage. The performance of the controller is demonstrated in the presence of both types of disturbances through numerical simulations. 
In contrast, the SPR tracking controller is derived such that the tracking error dynamics inherits the natural damping characteristic of the helicopter. 
%This avoids the unnecessary cancellation of inherent damping, thereby making it robust. 
The SPR controller is shown to be almost globally asymptotically stable and its performance is evaluated experimentally by performing aggressive flip maneuvers. Throughout the study, a nonlinear coupled rotor-fuselage helicopter model with first order flap dynamics is used.
\end{abstract}

% Note that keywords are not normally used for peerreview papers.
\begin{IEEEkeywords}
Geometric Control, Attitude Tracking, Helicopter, Robust Control.
\end{IEEEkeywords}

\section{Introduction} \label{sec1}
\IEEEPARstart{S}{mall-scale} helicopters with a single main rotor and a tail rotor are capable of performing extreme 3D aerobatic maneuvers \cite{gavrilets2001aggressive,abbeel2010autonomous,gerig2008modeling}. 
These aggressive maneuvers involve large angle rotations with high angular velocity, inverted flight, split-S, pirouette etc. This necessitates a robust attitude tracking controller which is globally defined and is capable of achieving such aggressive rotational maneuvers.

The attitude tracking problem of a helicopter is significantly different from that of a rigid body. A small-scale helicopter is modeled as a coupled interconnected system consisting of a fuselage and a rotor. The control moments generated by the rotor excite the rigid body dynamics of the fuselage which in-turn affects the rotor loads and its dynamics causing nonlinear coupling. The key differences between the rigid body tracking problem and the attitude tracking of a helicopter are the following: 1) the presence of large aerodynamic damping in the rotational dynamics and 2) the required control moment for tracking cannot be applied instantaneously due to the rotor blade dynamics. The control moments are produced by the rotor subsystem which, for the purpose of attitude tracking, can be approximated as a first order system \cite{mettler2013identification}. The damping introduced by the rotor subsystem does not hamper attitude stabilization or slow trajectory tracking. But it becomes a serious impediment for fast aerobatic maneuvers, which is what has been addressed in this article. This is in contrast to a quadrotor, where the rigid body approximation is close to the actual dynamics due to relatively small rotors.

\subsection{Related Work}
The significance of including the rotor dynamics in controller design for helicopters has been extensively studied in the literature \cite{hall1973inclusion,takahashi1994h,ingle1994effects,panza2014rotor}.  Hall Jr and Bryson Jr \cite{hall1973inclusion} have shown the importance of rotor state feedback in achieving tight attitude control for large scale helicopters, Takahashi \cite{takahashi1994h} compares $H_\infty$ attitude controller design for the cases with and without rotor state feedback. In a similar work, Ingle and Celi \cite{ingle1994effects} have investigated the effect of including rotor dynamics on various controllers, namely LQG, Eigenstructure Assignment and $H_\infty$, for meeting stringent handling quality requirements. They conclude that the controllers designed to meet the high bandwidth requirements with the rotor dynamics were more robust and required lower control activity than the ones designed without including the rotor dynamics. Panza and Lovera \cite{panza2014rotor} used rotor state feedback and designed an $H_\infty$ controller which is robust and also fault tolerant with respect to failure of the rotor state sensor. Previous attempts to small-scale helicopter attitude control are mostly based on attitude parametrization such as Euler angles, which suffer from  singularity issues, or quaternions which have ambiguity in representation. Tang, Yang, Qian, and Zheng \cite{tang2015attitude} explicitly consider the rotor dynamics and design stabilizing controller based on sliding mode technique using Euler angles and hence confined to small angle maneuvers. Raptis, Valavanis, and Moreno \cite{raptis2011novel} have designed position tracking controller for small-scale helicopter wherein the inner loop attitude controller was based on rotation matrix, but does not consider the rotor dynamics. Marconi and Naldi \cite{marconi2006robust,marconi2007robust} designed a position tracking controller for flybared (with stabilizer bar) miniature helicopter which is robust with respect to large variations in parameters, but have made the simplifying assumption of disregarding the rotor dynamics by taking a static relation between the flap angles and the cyclic input. Stressing the significance of rotor dynamics,  Ahmed and Pota \cite{ahmed2009flight,ahmed2010flight} developed a backstepping based stabilizing controller using Euler angles for a small-scale flybared helicopter with the inclusion of servo and rotor dynamics. They have provided correction terms in the controller to incorporate the effect of servo and rotor dynamics. For near hover conditions, Zhu and Huo \cite{zhu2013robust} have developed a robust nonlinear controller disregarding the flap dynamics.  Frazzoli, Dahleh, and Feron \cite{frazzoli2000trajectory} developed a coordinate chart independent trajectory tracking controller on the configuration manifold $SE(3)$ for a small-scale helicopter, but the flap dynamics, critical for accurate representation of the system, was not taken into account.

\subsection{Contribution}
The present work emphasizes on the inclusion of the rotor dynamics in the design of attitude tracking controller for small scale aerobatic helicopters. It is an improvement over the simple attitude tracking controller proposed by the authors in \cite{raj2017attitude}. The backstepping technique employed in the previous work warranted the removal of damping term from the dynamics for performing aggressive maneuvers. As shown in Sec \ref{sec:sim}, uncertainties in the rotor time constant estimate, $\tau_m$, or its variation with vehicle operating condition could result in excess removal of damping, thereby injecting energy into the system and making the closed loop unstable. The novelty of the present work is in the design of two singularity free attitude tracking controllers which are robust to the aforementioned parametric variation. 
The first controller uses backstepping technique to design a robust compensator to actively suppress the disturbance induced tracking error. Disturbance entering into the system due to uncertainty in parameters is termed \textbf{\textit{structured}} uncertainty. In addition, the effect of a time varying exogenous torque acting on the fuselage is also considered and is lumped together as $\Delta_f(t)$ in \eqref{eq:rigid_body}, hence termed \textbf{\textit{unstructured}}.
The proposed BRC controller ensures robustness to both uncertainties and renders the solutions of the associated error dynamics to be uniformly ultimately bounded. It is observed through numerical simulations that the proposed controller is capable of performing aggressive rotational maneuvers in the presence of the aforementioned structured and unstructured disturbances. The disadvantage of this controller is the need for, difficult to measure, flap angle state for feedback.
On the other hand, the second controller is derived such that the tracking error dynamics inherits the natural damping characteristic of the helicopter. This avoids the unnecessary cancellation of inherent damping, thereby making it robust. This controller has the added advantage of being free of flap and angular velocity feedback and hence easily implementable. On the downside, due to the passive nature of its robustness, the controller cannot confine the tracking error to prescribed limits. The performance of the controller is proven through experiments by performing aggressive flip maneuvers on a small scale helicopter.

%\subsection{Outline}
The paper is organized as follows: Section \ref{sec:model} describes the rotor-fuselage dynamics of a small-scale helicopter, explains the effect of aerodynamic damping, and motivates the need for a robust attitude tracking controller. Section \ref{sec:rob_cont} presents the backstepping based robust attitude tracking controller. The efficacy of the BRC controller is demonstrated through numerical simulation in Section \ref{sec:sim}. The structure preserving controller and its experimental validation are provided in Sections \ref{sec:str_rob_cont} and \ref{sec:Exp} respectively.

%%%%%%%%%%%%%%%%%%%%%%%%%%%%%%%%%%%%%%%%%%%%%%%%%%%%%%%%%%%%%%%%%%%%%%%%%
%%%%%%%%%%%%%%%%%%%%%%%%%%%%%%%%%%%%%%%%%%%%%%%%%%%%%%%%%%%%%%%%%%%%%%%%%
%%%%%%%%%%%%%%%%%%%%%%%%%%%%%%%%%%%%%%%%%%%%%%%%%%%%%%%%%%%%%%%%%%%%%%%%%

\section{Helicopter Model} \label{sec:model}

Unlike quadrotors, a helicopter modeled as a rigid body does not capture all the dynamics required for designing high bandwidth attitude tracking controllers. For this purpose, we consider here a minimal model of a small scale flybarless helicopter which consists of a fuselage and a rotor. The fuselage is modeled as a rigid body and the rotor as a first order system which generates the required control moment. The inclusion of the rotor model is crucial, since it introduces the significant aerodynamic damping in the system, which is an integral part of the dynamics of the helicopter. This clearly distinguishes the helicopter control problem from that of rigid bodies in space and robotics applications, where this interplay is not present.

The rotational equations of motion of the fuselage are given by,
\begin{equation} \label{eq:rigid_body}
\begin{split}
\dot{R} &= R\hat{\omega}, \\
J\dot{\omega} + \omega\times J\omega &= M + \Delta_f(t),
\end{split}
\end{equation}
where $R \in SO(3)$ is the rotation matrix which transforms vectors from the body fixed frame, $(O_b,X_b,Y_b,Z_b)$, to the inertial frame of reference, $(O_e,X_e,Y_e,Z_e)$. $M \triangleq [M_x,M_y,M_z]$ is the external moment acting on the fuselage due to the rotor, $\Delta_f$ is a time varying disturbance torque bounded by $\norm{\Delta_f(t)} < \delta_f$. This torque could arise due to center of mass offset from the main rotor shaft axis or the moment due to a slung load attached to a point different from the center of mass. $J$ is the body moment of inertia of the fuselage, and $\omega = [\omega_x,\omega_y,\omega_z]$ is the angular velocity of the fuselage expressed in the body frame. The hat operator, $\hat{(\cdot)}$, is a Lie algebra isomorphism from $\mathbb{R}^3$ to $\mathfrak{so(3)}$ given by
\begin{equation*}
\hat{\omega} = \begin{bmatrix}
0 & -\omega_z & \omega_y \\
\omega_z & 0 & -\omega_x \\
-\omega_y & \omega_x & 0 
\end{bmatrix}.
\end{equation*}

\begin{figure}[!t]
\centering
\captionsetup{justification=centering}
\includegraphics[scale=0.75]{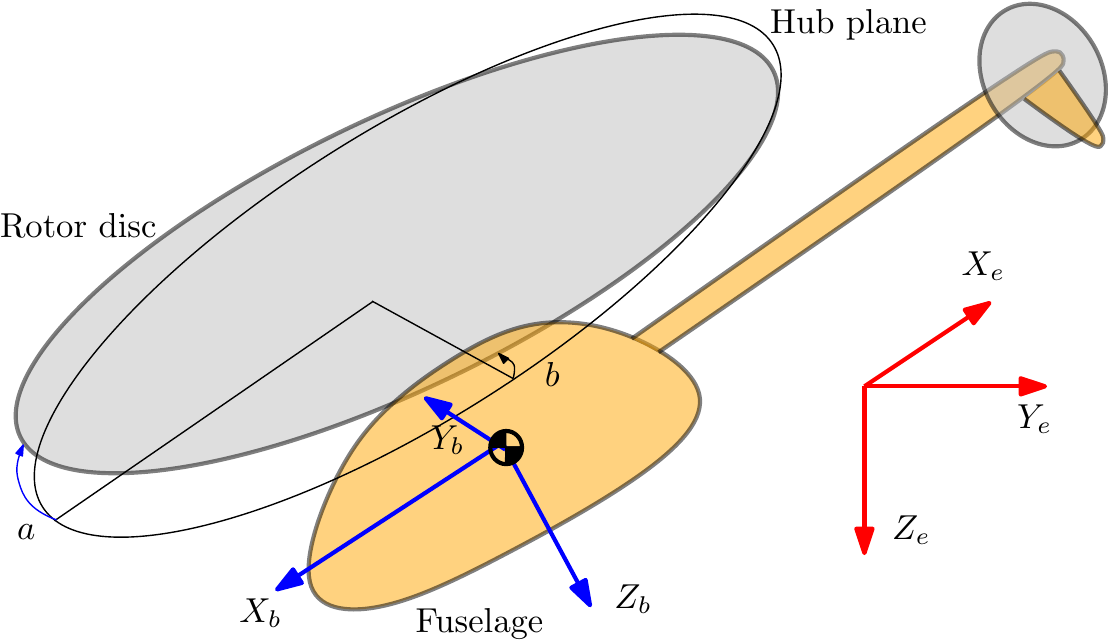}
\caption{Fuselage and rotor disc with flap angles.}
\label{fig:heli_model}
\end{figure}

The first order tip path plane (TPP) equations for the rotor are considered, as they capture the required dynamics for gross movement of the fuselage \cite{mettler2013identification}. The coupled flap equation for a counter-clockwise rotor are given by \cite{mettler2013identification, chen1980effects}
\begin{equation} \label{eq:flap_eq}
\begin{split}
\dot{a} = - \frac{1}{\tau_m}a + \frac{k_\beta}{2 \Omega I_\beta}b -\omega_y  + \frac{1}{\tau_m} \left( \theta_a - \frac{\omega_x}{\Omega} \right) ,\\
\dot{b} = - \frac{1}{\tau_m}b - \frac{k_\beta}{2 \Omega I_\beta}a -\omega_x  + \frac{1}{\tau_m}\left( \theta_b + \frac{\omega_y}{\Omega} \right). 
\end{split}
\end{equation}
where $a$ and $b$ are respectively the longitudinal and lateral tilt of the rotor disc with respect to the hub plane as shown in Fig. \ref{fig:heli_model}. $\tau_m$ is the main rotor time constant and $\theta_a$ and $\theta_b$ are the control inputs to the rotor subsystem. They are respectively the lateral and longitudinal cyclic blade pitch angles actuated by servos through a swash plate mechanism. $k_\beta$ is the blade root stiffness, $\Omega$ is the main rotor angular velocity, and $I_\beta$ is the blade moment of inertia about the flap hinge.
The above equation introduces cross coupling through flap angle and angular velocity. 
Note that the effect of the angular velocity cross coupling can be effectively canceled using the fuselage angular velocity feedback, since the rotor angular velocity, $\Omega$, can be measured accurately using an on-board autopilot.

%\begin{figure}[!htbp]
%\begin{center}
%\includegraphics[scale=0.75]{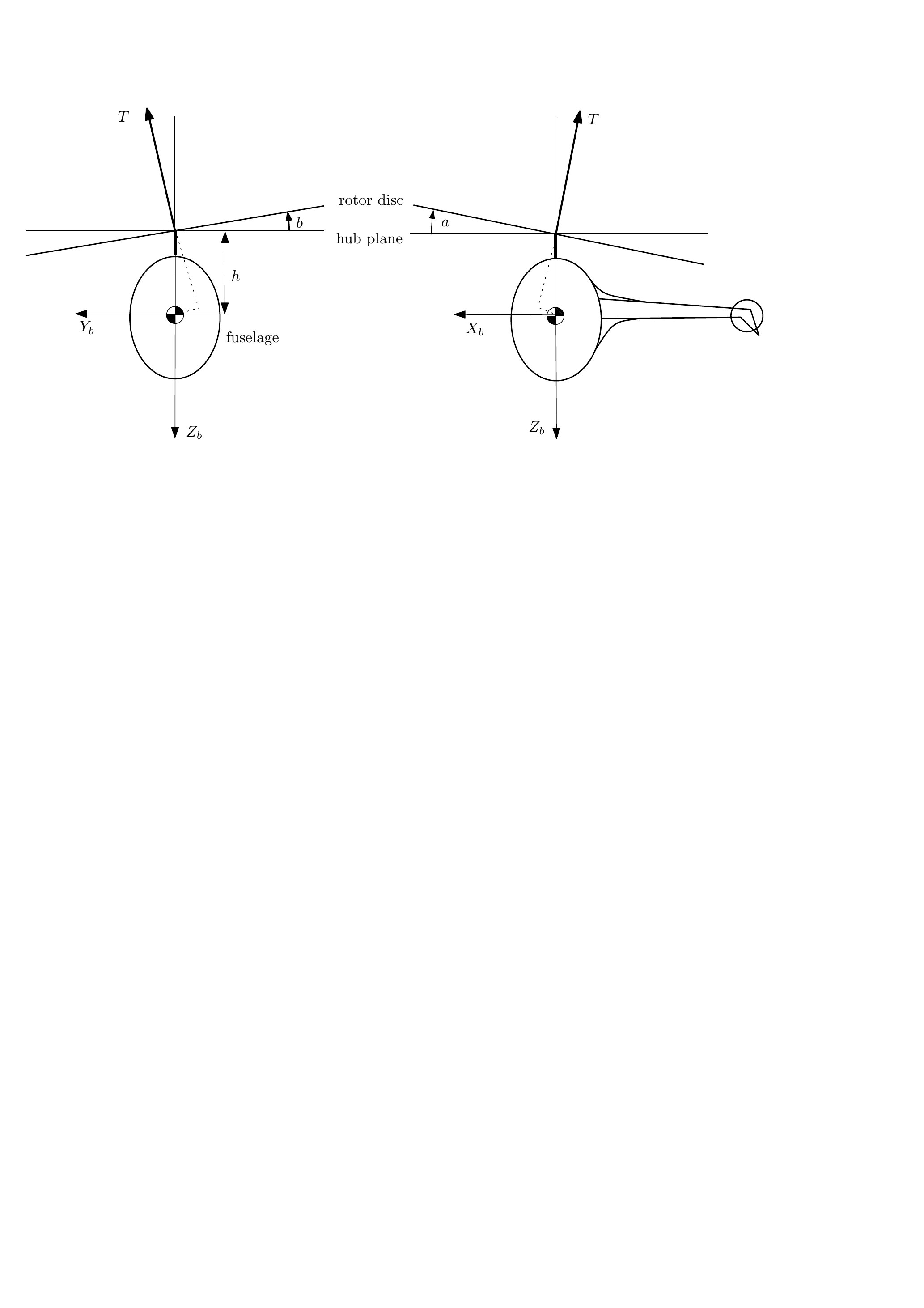}
%%\includegraphics[scale=0.75]{figures/heli_frame.pdf}
%\caption{Rotor-fuselage coupling and flap angles.}
%\label{fig:rotor_fuselage}
%\end{center}
%\end{figure}

The coupling of rotor and fuselage occurs through the rotor hub. The rolling moment, $M_x$ and pitching moment $M_y$, acting on the fuselage due to the rotor flapping consists of two components -- due to tilting of the thrust vector, $T$, and due to the rotor hub stiffness, $k_\beta$, and are given by
\begin{equation}
\begin{split}
M_x &= (hT + k_\beta)b, \\
M_y &= (hT + k_\beta)a,
\end{split}
\end{equation}
where $h$ is the distance of the rotor hub from the center of mass. For small-scale helicopters, the rotor hub stiffness $k_\beta$ is much larger than the component due to tilting of thrust vector, $hT$ (see Table \ref{tab:heli_params}). Thus, a nominal variation in thrust would result in only a small variation of the equivalent hub stiffness, $K_\beta \triangleq (hT + k_\beta)$. The control moment about yaw axis, $M_z$, is applied through tail rotor which, due to it's higher RPM, has a much faster aerodynamic response than the main rotor flap dynamics. The tail rotor along with the actuating servo is modeled as a first order system with $\tau_t$ as the tail rotor time constant,
\begin{equation}
\dot{M}_z = -M_z/\tau_t -K_t \omega_z  + K_t K_{t0}\theta_t/\tau_t,
\end{equation}
where $K_{t0}$ relates the steady state yaw rate to control input, $\theta_t$.

\begin{figure} [!t]
\centering
\subfloat[]{\includegraphics[width=1\linewidth,height=8cm]{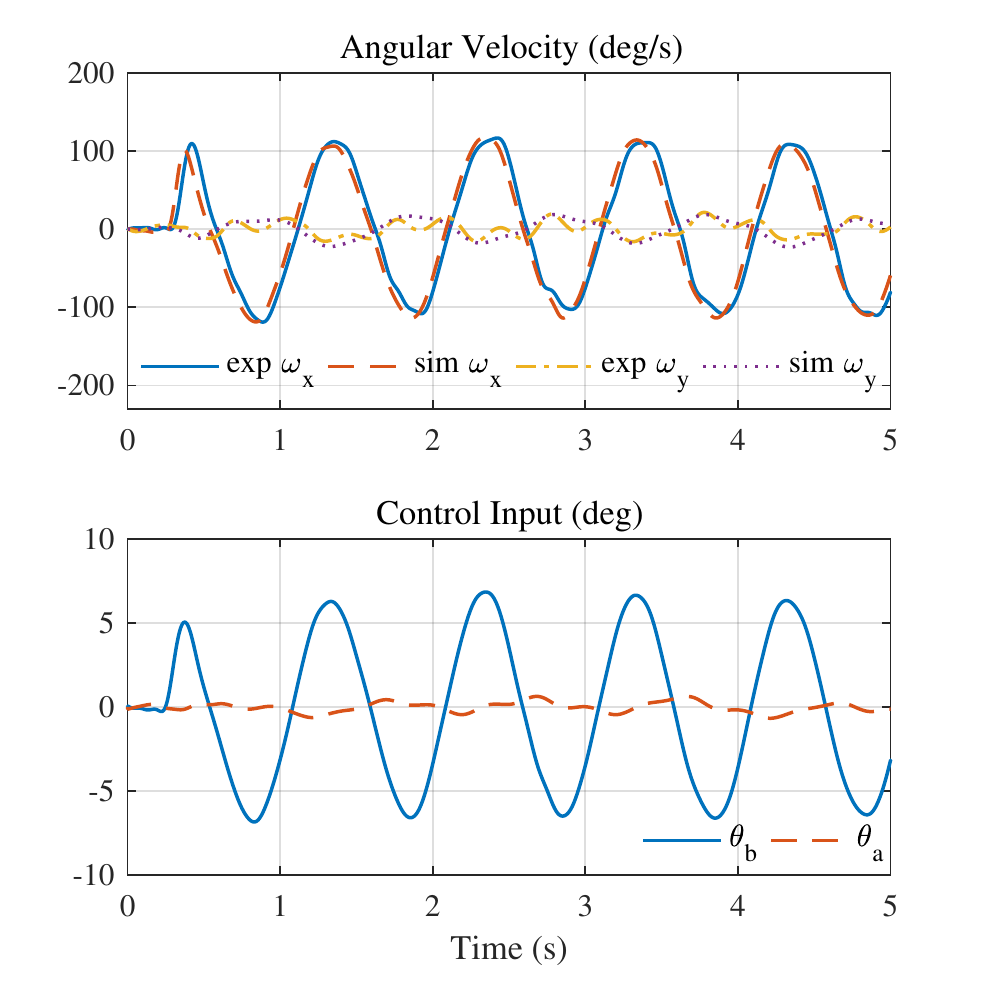}}\\
\subfloat[]{\includegraphics[width=1\linewidth,height=8cm]{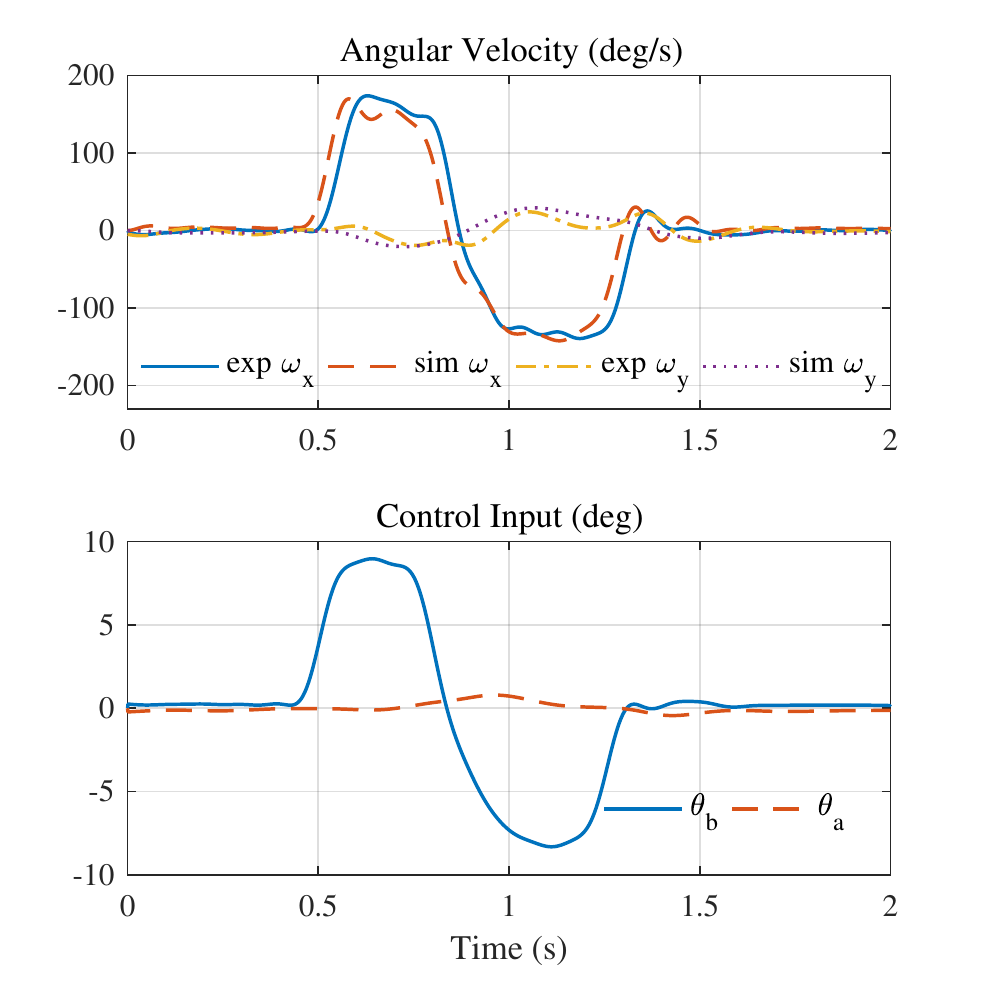}}
\caption{Model validation with experimental flight data. (a) Sinusoidal input, (b) Doublet input}
\label{fig:model_val}
\end{figure}

The main rotor dynamics \eqref{eq:flap_eq} and tail rotor dynamics could be written in terms of the control moments and a pseudo-control input $\theta \triangleq [\theta_b + \omega_y/\Omega, \theta_a - \omega_x/\Omega, K_{t0}\theta_t]$ as,
\begin{equation} \label{eq:actuator_th}
\dot{M} = AM - K\omega + KA_\tau\theta,
\end{equation}
where
\begin{equation}
A \triangleq \begin{bmatrix}
-1/\tau_m & -k & 0 \\
 k & -1/\tau_m & 0 \\
0 & 0 & -1/\tau_t 
\end{bmatrix},
\end{equation}
$A_\tau \triangleq diag(1/\tau_m,1/\tau_m,1/\tau_t)$, $K \triangleq diag(K_\beta,K_\beta,K_t) $, $k \triangleq k_\beta/(2\Omega I_\beta)$, and $M \triangleq [M_x,M_y,M_z]$. The symmetric and skew symmetric parts of $A$ are denoted by $-A_\tau$ and $A_k$ respectively.
Note that the combined rotor-fuselage dynamics given by \eqref{eq:rigid_body} and \eqref{eq:actuator_th}  cannot be given the form of a \textit{simple mechanical system} \cite{bullo2004geometric} as the actuator dynamics is first order. This precludes the equations of motion being written in the usual form of a geodesic on a Riemannian manifold and standard control techniques for such systems being employed.

The above model is validated for a small scale aerobatic helicopter (Align Trex 700), used for both simulation and experimentation in later sections. Figure \ref{fig:model_val} shows the close match between the experimental data and simulated model for the same input. The vehicle parameters were estimated with sinusoidal input and validated for doublet input response and are given in Table \ref{tab:heli_params}. 

\begin{figure} [!t] 
\centering
\includegraphics[width=1\linewidth]{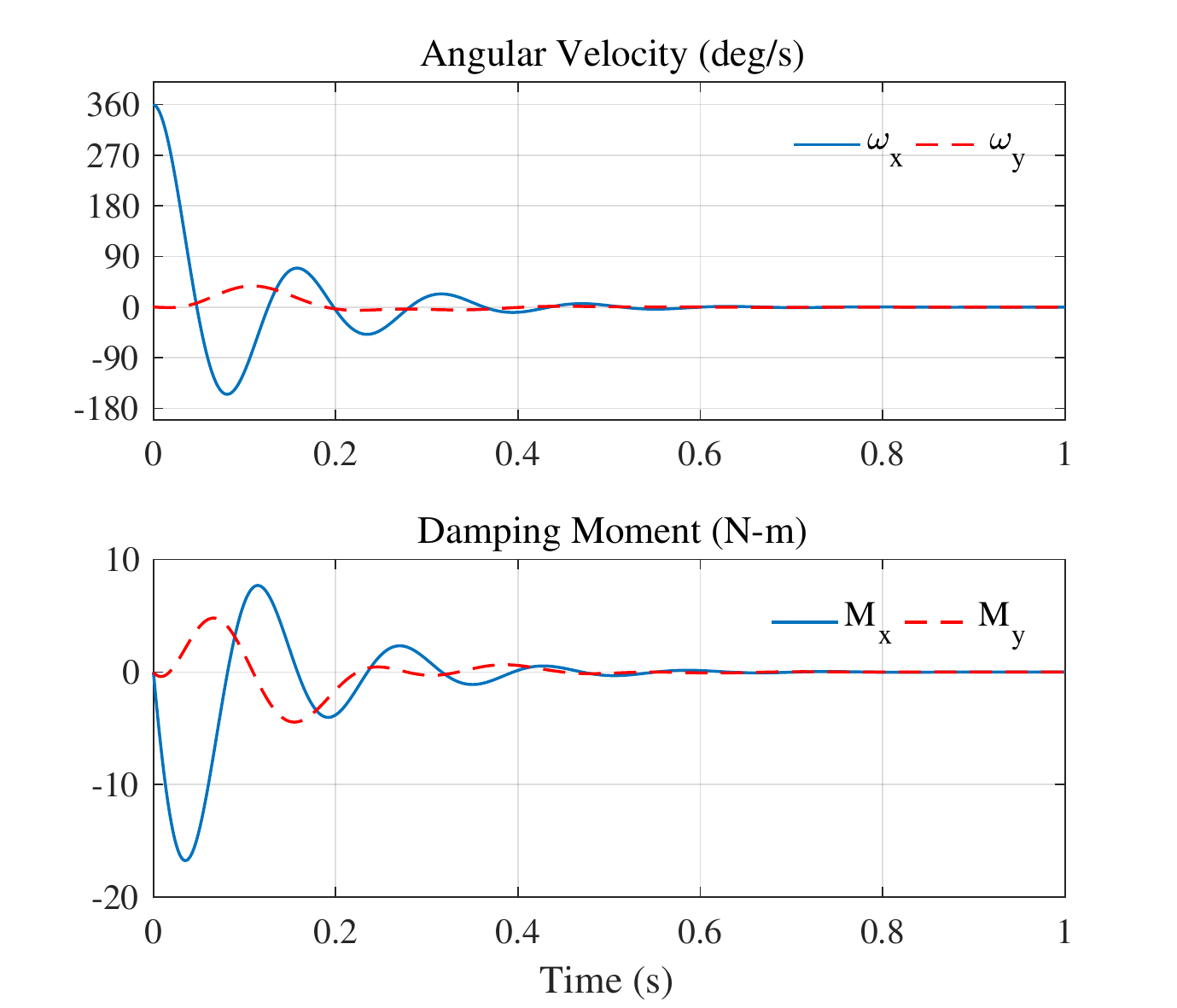}
\caption{The effect of aerodynamic damping on attitude dynamics of small-scale helicopter. Response to an initial condition of 360 deg/s about roll axis. Top: Angular Velocity, Bottom: Damping Moment generated}
\label{fig:damping}
\end{figure}

%\begin{figure} [!t] 
%\centering
%\subfloat[]{\includegraphics[width=1\linewidth,height=4cm]{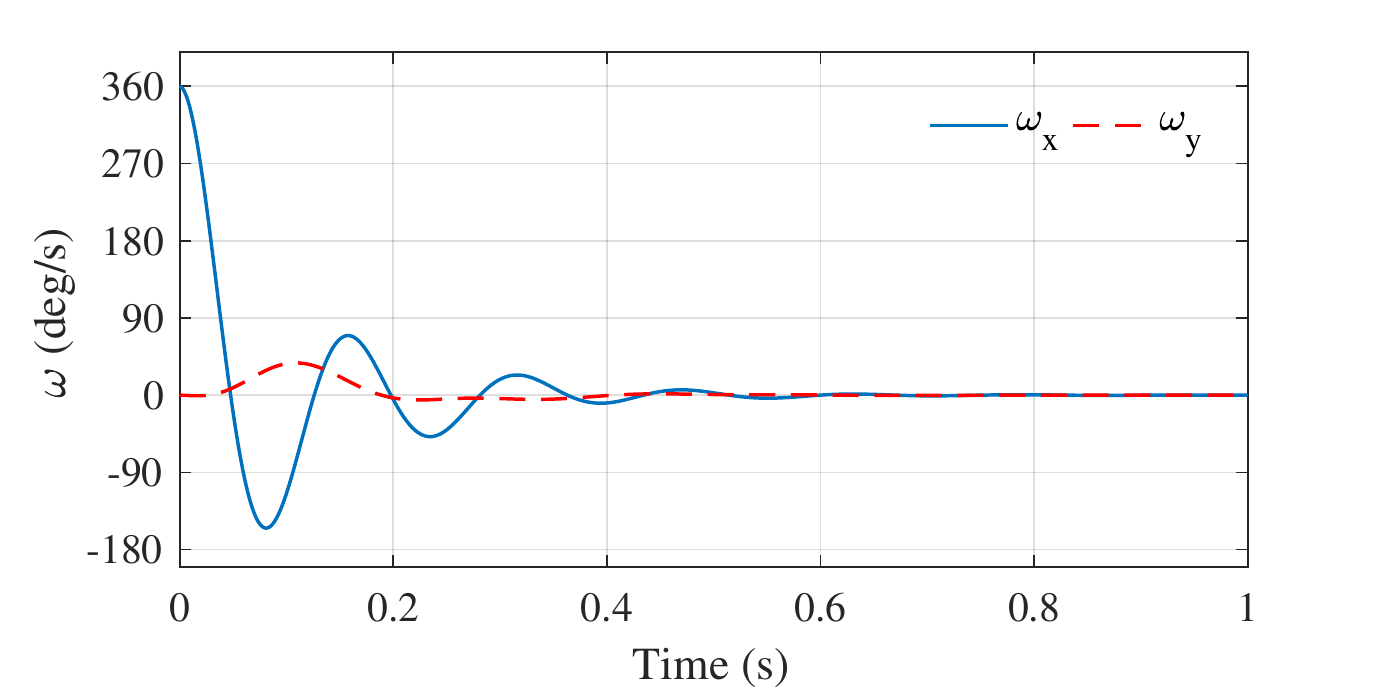}}\\
%\subfloat[]{\includegraphics[width=1\linewidth,height=4cm]{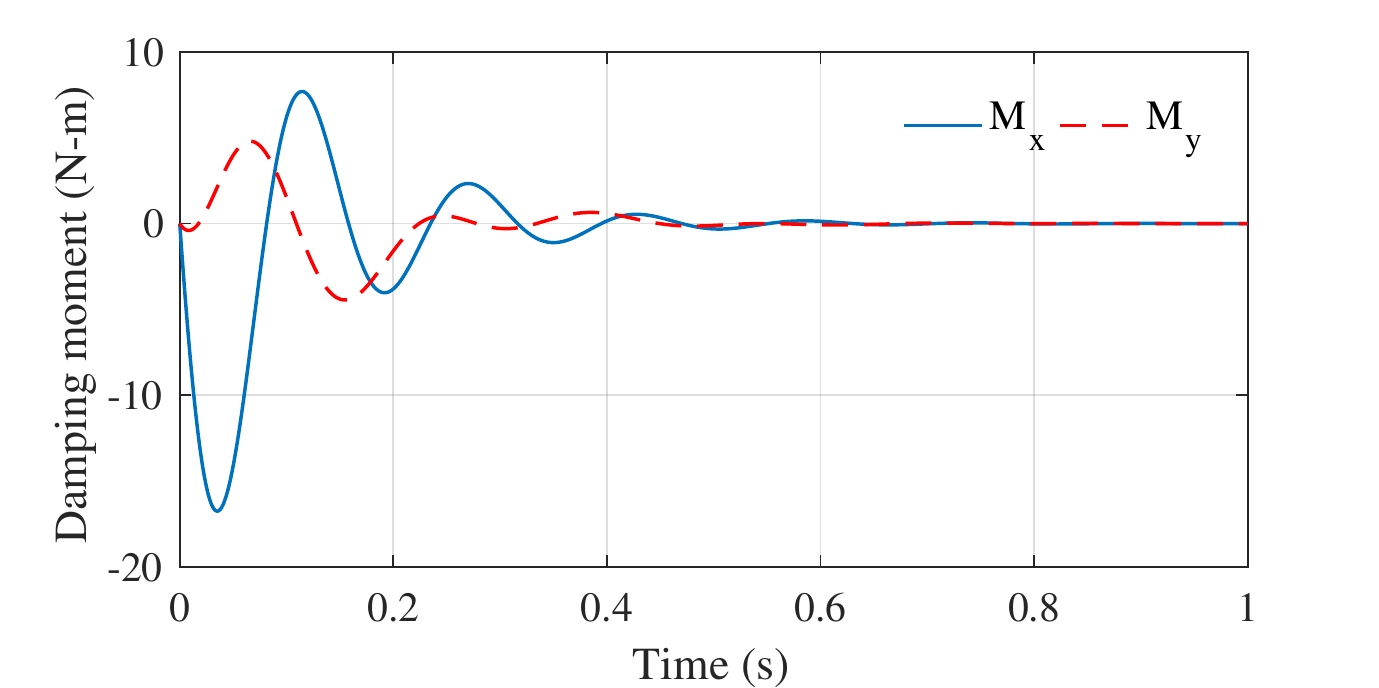}}
%\caption{The effect of aerodynamic damping on attitude dynamics of small-scale helicopter. Response to an initial condition of 360 deg/s about roll axis (a) Angular Velocity, (b)Damping Moment generated}
%\label{fig:damping}
%\end{figure}

We now elaborate a particular physical feature of the dynamics. The aerodynamic damping comes in through the presence of the negative angular velocity term, $-K\omega$, in \eqref{eq:actuator_th}. A positive angular velocity builds up negative flap angles, leading to negative moment and vice versa. The effect of the damping on the attitude dynamics of a small-scale helicopter (parameters in Table \ref{tab:heli_params}) is depicted in Fig \ref{fig:damping}. An initial angular velocity of 360 deg/s about the roll axis is damped to zero in less than a second, a characteristic which is absent in satellites and negligible in quadrotors. The maximum damping moment generated in the process is 17 N-m, which is a significant torque for a rigid body of this size. Since, aerobatic maneuvers involve attitude trajectories of large angular velocities, there is a need to design controllers which either actively cancels the damping effect, as done in the case of BRC controller, or takes advantage of it, as in the case of SPR controller. The effect of damping can be canceled effectively by augmenting the control input with angular velocity feedback. Exact cancellation requires the knowledge of the true value of the main rotor time constant, $\tau_m$. An overestimate of $\tau_m$ if used for canceling $\omega$ would result in a positive feedback of the angular velocity, which could make the attitude dynamics unstable as shown through simulation in Fig \ref{fig:str_fail}. The parameters of the rotor dynamics, rotor stiffness $K_\beta$, $K_t$ and time constants $\tau_m$, $\tau_t$, are estimated from flight data by means of system identification and hence prone to be erroneous. But we restrict our analysis to the case where the rotor stiffness parameters are known perfectly, although it can be easily extended to include such uncertainties. This assumption is justified as it has been observed through numerical simulations that even a significant error in K does not lead to tracking instability. This motivates the backstepping controller proposed in the paper which is robust with respect to uncertainty in the main rotor time constant. On the contrary, the structure preserving controller preserves the damping to achieve the same goal.

The estimates of rotor time constants are given by $\bar{\tau}_m = (1+\alpha_m)\tau_m$ and $\bar{\tau}_t = (1+\alpha_t)\tau_t$ where the uncertainty parameters $\alpha_m$, $\alpha_t$ satisfy $|\alpha_m|<\alpha_{m,max}<1$ and $|\alpha_t|<\alpha_{t,max}<1$. Define the maximum variation in parameters to be $\alpha \triangleq \max\{\alpha_{m,max},\alpha_{t,max}\}$. The estimates of $A_\tau$ and $A$ are given by
\begin{equation}
\bar{A}_\tau \triangleq  \begin{bmatrix}
1/\bar{\tau}_m & 0 & 0 \\
0 & 1/\bar{\tau}_m & 0 \\
0 & 0 & 1/\bar{\tau}_t 
\end{bmatrix}, \quad \bar{A} = -\bar{A}_\tau + A_k.
\end{equation}
Using the above relations we have
\begin{equation} \label{eq:A_norm}
\begin{gathered}
A_\tau \bar{A}_\tau^{-1} =  \begin{bmatrix}
1+\alpha_m & 0 & 0 \\
0 & 1+\alpha_m & 0 \\
0 & 0 & 1+\alpha_t 
\end{bmatrix}, \\
\max_{\substack{\alpha_m, \alpha_t}} \norm{I-A_\tau\bar{A}_\tau^{-1}} = \alpha.
\end{gathered}
\end{equation}

%%%%%%%%%%%%%%%%%%%%%%%%%%%%%%%%%%%%%%%%%%%%%%%%%%%%%%%%%%%%%%%%%%%%%%%%%%%%%%%
%%%%%%%%%%%%%%%%%%%%%%%%%%%%%%%%%%%%%%%%%%%%%%%%%%%%%%%%%%%%%%%%%%%%%%%%%%%%%%%
%%%%%%%%%%%%%%%%%%%%%%%%%%%%%%%%%%%%%%%%%%%%%%%%%%%%%%%%%%%%%%%%%%%%%%%%%%%%%%%

\section{Backstepping Robust Controller} \label{sec:rob_cont}

Given a twice differentiable attitude reference command $(R_d(t),\omega_d(t),\dot{\omega}_d(t))$, the objective is to design an attitude tracking controller for the helicopter. The combined rotor-fuselage dynamics is reproduced here for convenience
\begin{subequations} \label{eq:rotor-fuse}
\begin{equation}
\dot{R} = R\hat{\omega}, 
\end{equation}
\begin{equation} \label{eq:fuselage}
J\dot{\omega} + \omega\times J\omega = M + \Delta_f(t), 
\end{equation}
\begin{equation} \label{eq:rotor}
\dot{M} = AM - K\omega + KA_\tau \theta.
\end{equation}
\end{subequations}

In order to apply the backstepping approach, first the tracking problem is transformed to stabilization of the error dynamics. Then, a robust attitude tracking controller for the fuselage subsystem is designed, as is described in \cite{lee2013nonlinear}, which is subsequently extended to include the rotor dynamics. The configuration space of fuselage subsystem is the Lie group $SO(3)$. As is pointed out by Maithripala, Berg, and Dayawansa \cite{maithripala2006almost}, a general tracking problem on a Lie group can be reduced to a configuration stabilization problem about the identity element of the group. This is possible on a Lie group since the error between any two configurations can be naturally defined using the group operation. Such an operation is not always defined on a general configuration manifold. For the set of rotation matrices, $SO(3)$, the rotation error matrix between the current rotation $R$ and the desired rotation $R_d$ is defined as $R_e \triangleq R_d^T R$. $R_e$ transforms a vector from the current body frame to the desired body frame. To obtain the fuselage error dynamics, differentiate $R_e$
\begin{equation}
\dot{R}_e = R_d^T R \hat{\omega} - \hat{\omega}_d R_d^T R = R_e(\hat{\omega} - R_e^T \hat{\omega}_d R_e). 
\end{equation} 
Defining $e_\omega \triangleq \omega - R_e^T \omega_d$, then $\dot{R}_e$ reduces to
\begin{equation} \label{eq:Re_dot}
\dot{R}_e = R_e\hat{e}_\omega.
\end{equation}
Differentiate $J e_\omega$ to obtain the fuselage error dynamics
\begin{equation} \label{eq:att_err_dyn}
J \dot{e}_\omega = -\omega \times J\omega + J(\hat{e}_\omega R_e^T \omega_d - R_e^T \dot{\omega}_d) + M + \Delta_f.
\end{equation}
Defining $e_M \triangleq M - M_d$ as the difference between the actual and desired moment acting on the fuselage due to the rotor.
Differentiating $e_M$ gives the rotor error dynamics,
\begin{equation} \label{eq:rotor_err_dyn}
\dot{e}_M = Ae_M + AM_d - \dot{M}_d - K\omega + KA_\tau \theta.
\end{equation}
The equilibrium of the error dynamics corresponding to zero tracking error and hence meant to be stabilized is $(R_e,e_\omega,e_M) = (I,0,0)$. The tracking error stabilizing controller has proportional derivative plus feed-forward components. The proportional action is derived from a tracking error function $\psi : SO(3) \to \mathbb{R}$ which is defined as
\begin{equation}
\psi(R_e) \triangleq \frac{1}{2} tr[I - R_e].
\end{equation}
$\psi$ has a single critical point within the sub level set about the identity I, $\Gamma_{2} \triangleq \{ R \in SO(3) | \psi(R) < 2 \}$. This sublevel set represents the set of all rotations which are less than $\pi$ radians from the identity $I$. The derivative of $\psi$ is given by 
\begin{equation}
\begin{split}
\frac{d}{dt} \psi(R_e(t)) = \frac{1}{2} tr(-\dot{R}_e(t)) = -\frac{1}{2} tr(R_e \hat{e}_\omega) \\
= -\frac{1}{2} tr \left( \frac{1}{2} (R_e-R_e^T) \hat{e}_\omega \right) = e_R \cdot e_\omega,
\end{split}
\end{equation}
where the rotation error vector is
\begin{equation} \label{eq:prop_cont}
e_R \triangleq \frac{1}{2}[R_e-R_e^T]^\vee,
\end{equation}
where $(\cdot)^\vee : \mathfrak{so(3)}\rightarrow\mathbb{R}^3$ is the inverse of hat map $\hat{(\cdot)}$. The above derivation uses the fact that $-\frac{1}{2} tr(\hat{a}\hat{b}) = a \cdot b$ and the trace of the product of symmetric and skew symmetric matrices is zero.
The total derivative of $e_R$ is
\begin{equation} \label{eq:eR_dot}
\begin{split}
\dot{e}_R &= \frac{1}{2}(\dot{R}_e - \dot{R}_e^T)^\vee = \frac{1}{2}(R_e \hat{e}_\omega  + \hat{e}_\omega R_e^T)^\vee  \\
&= B(R_e)e_\omega,
\end{split}
\end{equation}
where $B(R_e) \triangleq \frac{1}{2}[tr(R_e^T)I - R_e^T]$.
Since $\psi$ is positive definite and quadratic within the sub level set $\Gamma_{\xi_2} \triangleq \{ R \in SO(3) | \psi(R) \leq \xi_2 \}$ for some positive $\xi_2 <2$, this makes $\psi$ uniformly quadratic about the identity \cite{bullo2004geometric}. This implies there exists positive constants $b_1 = 1/2$ and $b_2 = 1/(2-\xi_2)$ \cite{lee2011robust_adaptive} such that 
\begin{equation}
b_1\norm{e_R}^2 \leq \psi(R) \leq b_2\norm{e_R}^2.
\end{equation}

%%%%%%%%%%%%%%%%%%%%%%%%%%%%%%%%%%%%%%%%%%%%%%%%%%%%%%%%%%%%%%%%%%%%%%%%%%%%%%%%%%%%%%%%%%
The following definition of ultimate boundedness taken from \cite{khalil2002nonlinear} has been given here for the sake of completeness.
\begin{defn}
Consider the system
\begin{equation} \label{eq:lemma1_sys}
\dot{x} = f(t,x)
\end{equation}
where $f:[0,\infty) \times D \to \mathbb{R}^n$ is piecewise continuous in $t$ and locally Lipschitz in $x$ on $[0,\infty) \times D$, and $D \subset \mathbb{R}^n$ contains the origin (equilibrium).
The solutions of \eqref{eq:lemma1_sys} are \textbf{uniformly ultimately bounded} with ultimate bound $\mathtt{b}$ if there exist positive constants $\mathtt{b}$ and $\mathtt{c}$, independent of $t_0$, and for every $\mathtt{a} \in  (0,\mathtt{c})$, there is $T = T(\mathtt{a},\mathtt{b}) \geq 0$, independent of $t_0$, such that
\begin{equation}
\norm{x(t_0)} \leq \mathtt{a} \implies \norm{x(t)} \leq \mathtt{b}, \quad \forall t \geq t_0 + T.
\end{equation}  
\end{defn}
The following lemma uses Lyapunov analysis to show ultimate boundedness for \eqref{eq:lemma1_sys} and is a variation of Theorem 4.18 in \cite{khalil2002nonlinear}.
\begin{lemma}(\cite{khalil2002nonlinear})
Let $D \subset \mathbb{R}^n$ be a domain that contains the origin and $V:[0,\infty) \times D \to \mathbb{R}$ be a continuously differentiable function such that
\begin{equation*} \label{eq:lemma1_condt}
\begin{gathered}
k_1 \norm{x}^2 \leq V(t,x) \leq k_2 \norm{x}^2, \quad
\dot{V} \leq -k_3 \norm{x}^2 \\
\forall x \in \Lambda_{c_1}^{c_2} \triangleq \{ x \in D | c_1 \leq V \leq c_2 , 0<c_1<c_2 \}, \quad \forall t \geq 0,
\end{gathered}
\end{equation*}
for positive $k_1, k_2, k_3$ and consider the sublevel set $L_{c_2}^- \triangleq \{x \in \mathbb{R}^n | V \leq c_2 \} \subset D$. Then, for every initial condition $x(t_0) \in L_{c_2}^- $, the solution of \eqref{eq:lemma1_sys} is uniformly ultimately bounded with ultimate bound $\mathtt{b}$, i.e. there exists $T \geq 0$ such that
\begin{subequations}
\begin{equation} \label{eq:lemma1_result1}
\norm{x(t)} \leq \left( \frac{k_2}{k_1} \right)^{1/2} \norm{x(t_0)} e^{-\gamma (t-t_0)}, \quad \forall t_0 \leq t \leq t_0 + T
\end{equation}
\begin{equation} \label{eq:lemma1_result2}
\norm{x(t)} \leq \mathtt{b} , \quad \forall  t \geq t_0 + T 
\end{equation}
\end{subequations}
where $\gamma = \frac{k_3}{2k_2}$, $\mathtt{b} = \left( \frac{c_1}{k_1}\right)^{1/2}$.
\end{lemma}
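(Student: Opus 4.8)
The plan is to run the classical Lyapunov ultimate-boundedness argument: turn the two quadratic bounds into a scalar differential inequality for $V$ along the solution, invoke the comparison lemma for exponential decay, and use positive invariance of sublevel sets of $V$ to trap the trajectory.

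First I would show that $L_{c_2}^-$ is positively invariant. On its boundary $\{V=c_2\}$ the state lies in the annulus (since $c_1<c_2$ forces $c_1\le V\le c_2$ there), so $\dot V\le -k_3\norm{x}^2$; and from $k_2\norm{x}^2\ge V=c_2>0$ we get $\norm{x}^2\ge c_2/k_2$, hence $\dot V\le -k_3c_2/k_2<0$. Consequently a solution starting in $L_{c_2}^-$ cannot cross $\{V=c_2\}$ outward, so $V(t,x(t))\le c_2$ for all $t\ge t_0$; in particular the solution stays in $D$ and exists for all $t\ge t_0$. The identical computation with $c_1$ in place of $c_2$ shows $\dot V<0$ on $\{V=c_1\}$, so $\{x\in D \mid V(t,x)\le c_1\}$ is positively invariant as well.

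Next, whenever $c_1\le V(t,x(t))\le c_2$ I would chain the hypotheses as $\dot V\le -k_3\norm{x}^2\le -(k_3/k_2)V$ and apply the comparison lemma to obtain $V(t,x(t))\le V(t_0,x(t_0))\,e^{-(k_3/k_2)(t-t_0)}$ on that interval. Combined with the sandwich bounds this yields $k_1\norm{x(t)}^2\le V\le k_2\norm{x(t_0)}^2e^{-(k_3/k_2)(t-t_0)}$, and taking square roots gives \eqref{eq:lemma1_result1} with $\gamma=k_3/(2k_2)$. If $V(t_0,x(t_0))\le c_1$ already, take $T=0$; otherwise the same inequality gives $\dot V\le -(k_3/k_2)c_1<0$ while $V\ge c_1$, so $V$ decreases at a rate bounded away from zero and must reach the level $c_1$ at a finite time $t_0+T$ (one may take $T\le (k_2/k_3)\ln(c_2/c_1)$). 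For $t\ge t_0+T$ the solution stays in the invariant set $\{V\le c_1\}$, whence $k_1\norm{x(t)}^2\le V(t,x(t))\le c_1$ and $\norm{x(t)}\le (c_1/k_1)^{1/2}=\mathtt{b}$, which is \eqref{eq:lemma1_result2}.

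The only step requiring genuine care is the positive-invariance claim for a sublevel set of the time-varying $V$, together with the attendant fact that the solution does not escape $D$ (and hence exists in forward time). I would make this rigorous by the standard contradiction argument: if $V(t_2,x(t_2))>c_2$ for some $t_2>t_0$, let $t_1$ be the last time in $[t_0,t_2)$ at which $V=c_2$; then $x(t_1)$ lies on the annulus boundary, so $\dot V(t_1)\le -k_3c_2/k_2<0$, forcing $V$ to be strictly decreasing at $t_1$ and contradicting $V>c_2$ on $(t_1,t_2]$; the argument at the level $c_1$ is identical. (One should also read the lower bound $k_1\norm{x}^2\le V$ as holding throughout $L_{c_2}^-$, so that it may be invoked inside $\{V\le c_1\}$.) Once this is in place, the rest is a direct application of the comparison lemma and the quadratic bounds, with no further obstacles anticipated.
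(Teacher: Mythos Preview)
Your argument is correct and is precisely the standard Khalil-style proof the paper has in mind; the paper itself does not spell out a proof but simply writes ``The proof is straightforward'' after citing \cite{khalil2002nonlinear}. Your caveat about needing the lower sandwich bound $k_1\norm{x}^2\le V$ to hold on all of $L_{c_2}^-$ (not merely on the annulus) is well taken and is indeed the intended reading of the hypotheses.
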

\noindent
The proof is straightforward. The following theorem presents the backstepping robust controller.

\begin{theorem}
For all initial conditions starting in the set $S \triangleq \{(R_e,e_\omega,e_M)\in SO(3)\times \mathbb{R}^3 \times \mathbb{R}^3  | \psi(R_e) + \frac{1}{2} \tilde{e}_\omega \cdot J \tilde{e}_\omega + \frac{1}{2}e_M \cdot e_M \leq \xi_2 \}$ for a positive $\xi_2 < 2$, the solutions of the error dynamics \eqref{eq:Re_dot}, \eqref{eq:att_err_dyn}, and \eqref{eq:rotor_err_dyn} are rendered uniformly ultimately bounded  by the following choice of control input 
\begin{equation} \label{eq:control_input}
\theta = (K\bar{A}_\tau)^{-1}(-\bar{A}M_d + \dot{M}_d -\tilde{e}_\omega + K\omega + \mu_r),
\end{equation}
where $\tilde{e}_\omega = e_\omega + k_R e_R$,
\begin{equation} \label{eq:moment_des}
\begin{gathered}
M_d = -k_\omega \tilde{e}_\omega - e_R - k_R J B e_\omega + \omega \times J \omega
\\ -J(\hat{e}_\omega R_e^T \omega_d - R_e^T \dot{\omega}_d) + \mu_f, \\
\mu_f = \frac{-\delta_f^2 \tilde{e}_\omega}{\delta_f \norm{\tilde{e}_\omega} + \epsilon_f},
\end{gathered}
\end{equation}
\begin{equation} \label{eq:temp1}
\begin{gathered}
\mu_r = \frac{-\alpha}{1-\alpha} \frac{\norm{\delta_r}^2 e_M}{\norm{\delta_r}\norm{e_M} + \epsilon_r}, \\
\delta_r = \tilde{e}_\omega + A_kM_d - \dot{M}_d - K\omega,
\end{gathered}
\end{equation}
for some $k_R>0$, $k_\omega >0$ and $\epsilon_f>0$, $\epsilon_r>0$ such that
\begin{equation} \label{eq:epsilon_condt}
\epsilon \triangleq \epsilon_f + \epsilon_r < \xi_2\frac{\lambda_{min}(W)}{\lambda_{max}(U_2)}.
\end{equation}
The ultimate bound is given by 
\begin{equation} \label{eq:ultimate_bound}
\mathtt{b} = \left( \frac{\lambda_{max}(U_2)}{\lambda_{min}(U_1)\lambda_{min}(W)} \epsilon \right)^{1/2} .
\end{equation}
The matrices $U_1$, $U_2$ and $W$ are given by
\begin{equation} \label{eq:matricesU1U2W}
\begin{gathered}
U_1 \triangleq \frac{1}{2}\begin{bmatrix}
1 & 0 & 0 \\
0 & \lambda_{min}(J) & 0 \\
0 & 0 & 1 
\end{bmatrix},
U_2 \triangleq \frac{1}{2}\begin{bmatrix}
\frac{2}{2-\xi_2} & 0 & 0 \\
0 & \lambda_{max}(J) & 0 \\
0 & 0 & 1 
\end{bmatrix}, \\
W \triangleq \begin{bmatrix}
 k_R & 0 & 0 \\
 0 & k_\omega & 0 \\
 0 & 0 & \lambda_{min}(A_\tau) 
\end{bmatrix}.
\end{gathered}
\end{equation}
\end{theorem}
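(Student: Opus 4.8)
The plan is to use the function that defines the invariant set $S$ as a common Lyapunov candidate, $V \triangleq \psi(R_e) + \frac{1}{2}\tilde{e}_\omega \cdot J\tilde{e}_\omega + \frac{1}{2} e_M \cdot e_M$, and to verify that it satisfies the hypotheses of Lemma 1 with the norm-like vector $z \triangleq (\norm{e_R},\norm{\tilde{e}_\omega},\norm{e_M})^T$. First I would record the quadratic sandwich: inside the sublevel set $\Gamma_{\xi_2}$ the estimate $b_1\norm{e_R}^2 \le \psi(R_e) \le b_2\norm{e_R}^2$ (with $b_1 = 1/2$, $b_2 = 1/(2-\xi_2)$) combined with $\lambda_{min}(J)\norm{\tilde{e}_\omega}^2 \le \tilde{e}_\omega\cdot J\tilde{e}_\omega \le \lambda_{max}(J)\norm{\tilde{e}_\omega}^2$ gives $z^T U_1 z \le V \le z^T U_2 z$, hence $\lambda_{min}(U_1)\norm{z}^2 \le V \le \lambda_{max}(U_2)\norm{z}^2$, with $U_1,U_2$ exactly as in \eqref{eq:matricesU1U2W}.

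Next I would differentiate $V$ along \eqref{eq:Re_dot}, \eqref{eq:att_err_dyn} and \eqref{eq:rotor_err_dyn}. Using $\frac{d}{dt}\psi = e_R\cdot e_\omega$, $e_\omega = \tilde{e}_\omega - k_R e_R$ and $\dot{e}_R = B(R_e)e_\omega$, and substituting $M = M_d + e_M$ with $M_d$ from \eqref{eq:moment_des} into the $\tilde{e}_\omega$-equation, one checks that $M_d$ is precisely the choice that cancels the gyroscopic term $-\omega\times J\omega$, the feed-forward $J(\hat{e}_\omega R_e^T\omega_d - R_e^T\dot{\omega}_d)$ and the $k_R JB e_\omega$ term, leaving $J\dot{\tilde{e}}_\omega = -k_\omega\tilde{e}_\omega - e_R + e_M + \mu_f + \Delta_f$; the resulting $-\tilde{e}_\omega\cdot e_R$ then cancels the $e_R\cdot\tilde{e}_\omega$ produced by $\dot\psi$. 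For the rotor channel I would substitute $\theta$ from \eqref{eq:control_input} into \eqref{eq:rotor_err_dyn}. The key algebra: define $E \triangleq A_\tau\bar{A}_\tau^{-1}$; since $K$, $A_\tau$, $\bar{A}_\tau$ are diagonal (hence commute), $KA_\tau(K\bar{A}_\tau)^{-1} = E$ (this is where the assumption that $K$ is known is used), and since $A = -A_\tau + A_k$, $\bar{A} = -\bar{A}_\tau + A_k$, one obtains the identity $A - E\bar{A} = (I-E)A_k$, so that $\dot{e}_M = A e_M - \tilde{e}_\omega + E\mu_r + (I-E)\delta_r$ with $\delta_r$ as in \eqref{eq:temp1}. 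Now the $\pm\tilde{e}_\omega\cdot e_M$ cross terms cancel, $e_M\cdot A e_M = -e_M\cdot A_\tau e_M \le -\lambda_{min}(A_\tau)\norm{e_M}^2$ (the skew part $A_k$ drops), and $\dot{V} \le -k_R\norm{e_R}^2 - k_\omega\norm{\tilde{e}_\omega}^2 - \lambda_{min}(A_\tau)\norm{e_M}^2 + (\tilde{e}_\omega\cdot\mu_f + \tilde{e}_\omega\cdot\Delta_f) + (e_M\cdot E\mu_r + e_M\cdot(I-E)\delta_r)$.

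Then I would absorb the two bracketed disturbance groups into the constant $\epsilon$. For the unstructured term, $\norm{\Delta_f} < \delta_f$ and the $\mu_f$ in \eqref{eq:moment_des} give $\tilde{e}_\omega\cdot\Delta_f + \tilde{e}_\omega\cdot\mu_f \le \delta_f\norm{\tilde{e}_\omega} - \frac{\delta_f^2\norm{\tilde{e}_\omega}^2}{\delta_f\norm{\tilde{e}_\omega} + \epsilon_f} = \frac{\delta_f\norm{\tilde{e}_\omega}\epsilon_f}{\delta_f\norm{\tilde{e}_\omega} + \epsilon_f} < \epsilon_f$. For the structured term I would use $\norm{I - E} \le \alpha$ from \eqref{eq:A_norm} together with $\lambda_{min}(E) = \min\{1+\alpha_m, 1+\alpha_t\} \ge 1-\alpha$: then $e_M\cdot E\mu_r \le -\alpha\frac{\norm{\delta_r}^2\norm{e_M}^2}{\norm{\delta_r}\norm{e_M} + \epsilon_r}$ and $e_M\cdot(I-E)\delta_r \le \alpha\norm{\delta_r}\norm{e_M}$, whose sum equals $\alpha\norm{\delta_r}\norm{e_M}\frac{\epsilon_r}{\norm{\delta_r}\norm{e_M} + \epsilon_r} < \alpha\epsilon_r < \epsilon_r$ — the gain $\frac{\alpha}{1-\alpha}$ in $\mu_r$ being chosen exactly so the $\lambda_{min}(E)$ factor cancels it. Combining, $\dot{V} \le -z^T W z + \epsilon \le -\lambda_{min}(W)\norm{z}^2 + \epsilon$ with $W$ as in \eqref{eq:matricesU1U2W} and $\epsilon = \epsilon_f + \epsilon_r$.

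Finally, I would verify forward invariance of $S$: on $\{V = \xi_2\}$ one has $\norm{z}^2 \ge \xi_2/\lambda_{max}(U_2)$, so condition \eqref{eq:epsilon_condt} forces $\dot{V} \le \epsilon - \xi_2\frac{\lambda_{min}(W)}{\lambda_{max}(U_2)} < 0$; hence any trajectory starting in $S$ remains in $S$, where in particular $\psi(R_e) \le \xi_2 < 2$ so all the bounds above are valid. Applying Lemma 1 with $k_1 = \lambda_{min}(U_1)$, $k_2 = \lambda_{max}(U_2)$, $k_3 = \lambda_{min}(W)$, $c_2 = \xi_2$ and $c_1 = \lambda_{max}(U_2)\epsilon/\lambda_{min}(W)$ — where $0 < c_1 < c_2$ holds precisely by \eqref{eq:epsilon_condt} — yields uniform ultimate boundedness of $(e_R,\tilde{e}_\omega,e_M)$, hence of $(R_e,e_\omega,e_M)$, with ultimate bound $\mathtt{b} = (c_1/k_1)^{1/2}$, which is \eqref{eq:ultimate_bound}. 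I expect the main obstacle to be the structured-uncertainty bookkeeping in the middle two steps: deriving $A - E\bar{A} = (I-E)A_k$, propagating $E$ cleanly through the $\theta$-substitution (which only collapses because $K$ is assumed known), and checking that the $\frac{\alpha}{1-\alpha}$ scaling of $\mu_r$ is exactly what lets $e_M\cdot E\mu_r$ dominate the residual $(I-E)\delta_r$ term; the geometric-backstepping cancellations and the Lyapunov-redesign $\epsilon$-trick are otherwise routine.
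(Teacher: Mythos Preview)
Your proposal is correct and follows essentially the same backstepping/Lyapunov-redesign argument as the paper: the same candidate $V=\psi+\tfrac{1}{2}\tilde e_\omega\cdot J\tilde e_\omega+\tfrac{1}{2}e_M\cdot e_M$, the same cancellations via $M_d$ and $\theta$, the same $\epsilon_f$- and $\epsilon_r$-bounds on the two disturbance groups, and the same appeal to Lemma~1 with $c_1=\epsilon\,\lambda_{max}(U_2)/\lambda_{min}(W)$, $c_2=\xi_2$. Your bookkeeping is in fact slightly cleaner than the paper's in two places --- you make the identity $A-E\bar A=(I-E)A_k$ (with $E=A_\tau\bar A_\tau^{-1}$) explicit, which the paper leaves buried in the substitution, and you bound $e_M\cdot E\mu_r$ via $\lambda_{min}(E)\ge 1-\alpha$, which is the transparent version of the paper's ``$\|A_\tau\bar A_\tau^{-1}\|/(1-\alpha)>1$'' line --- but there is no substantive difference in method.
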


\begin{proof}

Consider the following positive definite quadratic function in the sublevel set $\Gamma_{\xi_2}$, $V_1 \triangleq \psi $. The time derivative of this function, $\dot{V}_1 = e_R \cdot e_\omega $, can be made negative definite by setting the virtual control input $e_\omega = -k_R e_R$. A change of variable $\tilde{e}_\omega = e_\omega + k_R e_R$ would make $\dot{V}_1 = -k_R \norm{e_R}^2 + e_R \cdot \tilde{e}_\omega $.\\
The error dynamics for $\tilde{e}_\omega$ is given by
\begin{equation*}
\begin{split}
\dot{\tilde{e}}_\omega &=  \dot{e}_\omega + k_R \dot{e}_R \\
&= J^{-1}(- \omega \times J\omega + J(\hat{e}_\omega R_e^T \omega_d - R_e^T \dot{\omega}_d) \\
& \quad + M + \Delta_f) + k_R B e_\omega.
\end{split}
\end{equation*}
A candidate Lyapunov function for the fuselage dynamics is given by $V_2 = V_1 + \frac{1}{2} \tilde{e}_\omega \cdot J \tilde{e}_\omega $. 
$\dot{V}_2$ is given by
\begin{equation} \label{eq:V2dot_temp}
\begin{split}
\dot{V}_2 &= \dot{V}_1 + \tilde{e}_\omega \cdot J \dot{\tilde{e}}_\omega \\
&= -k_R \norm{e_R}^2 + \tilde{e}_\omega \cdot (e_R - \omega \times J\omega + k_R J B e_\omega  + M \\
 &\quad + \Delta_f + J(\hat{e}_\omega R_e^T \omega_d - R_e^T \dot{\omega}_d))  
\end{split}
\end{equation}
Setting $M = M_d$ from \eqref{eq:moment_des} in the above equation would result in
\begin{equation}
\begin{split}
\dot{V}_2 &= -k_R \norm{e_R}^2 - k_\omega \norm{\tilde{e}_\omega}^2 + \tilde{e}_\omega \cdot \left( \Delta_f - \frac{\delta_f^2 \tilde{e}_\omega}{\delta_f \norm{\tilde{e}_\omega} + \epsilon_f} \right) \\
&\leq -k_R \norm{e_R}^2 - k_\omega \norm{\tilde{e}_\omega}^2 + \norm{\tilde{e}_\omega} \delta_f - \frac{\delta_f^2 \norm{\tilde{e}_\omega}^2}{\delta_f \norm{\tilde{e}_\omega} + \epsilon_f} \\
&= -k_R \norm{e_R}^2 - k_\omega \norm{\tilde{e}_\omega}^2 + \epsilon_f \frac{\delta_f \norm{\tilde{e}_\omega}} {\delta_f \norm{\tilde{e}_\omega} + \epsilon_f} \\
&< -k_R \norm{e_R}^2 - k_\omega \norm{\tilde{e}_\omega}^2 + \epsilon_f.
\end{split}
\end{equation}
Adding and subtracting $M_d$ in \eqref{eq:V2dot_temp} would give
\begin{equation}
\begin{split}
\dot{V}_2 &= -k_R \norm{e_R}^2 - k_\omega \norm{\tilde{e}_\omega}^2 + \tilde{e}_\omega \cdot ( \Delta_f + \mu_f + e_M) \\
&< -k_R \norm{e_R}^2 - k_\omega \norm{\tilde{e}_\omega}^2 + \epsilon_f + e_M \cdot \tilde{e}_\omega.
\end{split}
\end{equation}
\\
Augmenting the above Lyapunov function for the fuselage with the quadratic form $\frac{1}{2} e_M\cdot e_M$ gives a candidate Lyapunov function for the complete rotor-fuselage dynamics, $V_3 = V_2 + \frac{1}{2}e_M\cdot e_M$. The derivative of $V_3$ is bounded by
\begin{equation*}
\begin{split}
\dot{V}_3 &= \dot{V}_2 + e_M\cdot \dot{e}_M \\
&< -k_R \norm{e_R}^2 - k_\omega \norm{\tilde{e}_\omega}^2 + \epsilon_f + e_M \cdot \tilde{e}_\omega \\ 
&\quad + e_M \cdot (Ae_M + AM_d - \dot{M}_d - K\omega + KA_\tau \theta).
\end{split}
\end{equation*}
Since $-A_\tau$ is the symmetric part of $A$, the above inequality can be written as
\begin{equation*}
\begin{split}
\dot{V}_3 &< -k_R \norm{e_R}^2 - k_\omega \norm{\tilde{e}_\omega}^2 + \epsilon_f -e_M \cdot A_\tau e_M \\ 
&\quad + e_M \cdot (\tilde{e}_\omega + AM_d - \dot{M}_d - K\omega + KA_\tau \theta).
\end{split}
\end{equation*}
Setting $\theta$ from \eqref{eq:control_input} would make the above inequality 
\begin{equation*}
\dot{V}_3 < -z \cdot W z + \epsilon_f + e_M \cdot ((I - A_\tau \bar{A}_\tau^{-1})\delta_r + A_\tau \bar{A}_\tau^{-1} \mu_r)
\end{equation*}
where $z = (\norm{e_R} \norm{\tilde{e}_\omega} \norm{e_M})$, and $\delta_r$ and $W$ are given in \eqref{eq:temp1} and \eqref{eq:matricesU1U2W} respectively. Now consider the last term of the above inequality
\begin{equation*}
\begin{gathered}
\zeta \triangleq e_M \cdot ((I - A_\tau \bar{A}_\tau^{-1})\delta_r + A_\tau \bar{A}_\tau^{-1} \mu_r), \\
\zeta \leq \max_{\substack{\alpha_m, \alpha_t}} \norm{I-A_\tau\bar{A}_\tau^{-1}} \norm{e_M} \norm{\delta_r} + e_M \cdot A_\tau\bar{A}_\tau^{-1} \mu_r.
\end{gathered}
\end{equation*}
Setting $\mu_r$ from \eqref{eq:temp1} and using the relation in \eqref{eq:A_norm} would result in
\begin{equation*}
\zeta \leq \alpha \norm{e_M} \norm{\delta_r} - \frac{\alpha}{1-\alpha} e_M \cdot A_\tau \bar{A}_\tau^{-1} e_M \frac{\norm{\delta_r}^2}{\norm{\delta_r}\norm{e_M} + \epsilon_r}.
\end{equation*}
Since $A_\tau \bar{A}_\tau^{-1}$ is positive definite and $\frac{\norm{ A_\tau \bar{A}_\tau^{-1}}}{1-\alpha} > 1$,
\begin{equation*}
\begin{split}
\zeta &\leq \alpha \norm{e_M} \norm{\delta_r} - \frac{\alpha \norm{\delta_r}^2 \norm{e_M}^2}{\norm{\delta_r}\norm{e_M} + \epsilon_r} \\
 &= \epsilon_r \frac{\alpha \norm{\delta_r} \norm{e_M}}{\norm{\delta_r}\norm{e_M} + \epsilon_r} < \epsilon_r 
\end{split}
\end{equation*}
Therefore,
\begin{equation}
\dot{V}_3 < -z \cdot W z + \epsilon,
\end{equation}
where $\epsilon = \epsilon_f + \epsilon_r$.

Next the ultimate boundedness for the tracking error dynamics is shown.
$V_3$ is positive definite and quadratic when $\psi(R_e) \leq \xi_2$ for some positive $\xi_2<2$. This is guaranteed when $V_3 \leq \xi_2$.
As a result, $V_3$ satisfies the following inequality in the sublevel set $L_{\xi_2}^- \triangleq \{ (R_e,\tilde{e}_\omega, e_M)\in SO(3) \times \mathbb{R}^3 \times \mathbb{R}^3 | V \leq \xi_2 \}$
\begin{equation}
z\cdot U_1 z \leq V_3 \leq z\cdot U_2 z \\
\end{equation}
or
\begin{equation}
\lambda_{min}(U_1) \norm{z}^2 \leq V_3 \leq \lambda_{max}(U_2) \norm{z}^2
\end{equation}
for positive definite $U_1$ and $U_2$ given by \eqref{eq:matricesU1U2W}.
$\dot{V}_3$ along the solution of error dynamics is guaranteed to be negative definite when
\begin{equation}
-z\cdot W z + \epsilon \leq -\norm{z}^2\lambda_{min}(W) + \epsilon \leq  -V_3\frac{\lambda_{min}(W)}{\lambda_{max}(U_2)} + \epsilon \leq 0 \\
\end{equation} 
or
\begin{equation}
V_3 \geq \left\lbrace \epsilon \frac{\lambda_{max}(U_2)}{\lambda_{min}(W)} \triangleq \xi_1 \right\rbrace
\end{equation}
or in the superlevel set $L_{\xi_1}^+ \triangleq \{ (R_e,\tilde{e}_\omega, e_M)\in SO(3) \times \mathbb{R}^3 \times \mathbb{R}^3 | V \geq \xi_1 \}$. \\
Condition \eqref{eq:lemma1_condt} of lemma 1 is satisfied in the set $\Lambda_{\xi_1}^{\xi_2} \triangleq L_{\xi_1}^- \cap L_{\xi_2}^+ $ and $\xi_1 < \xi_2$ is met by \eqref{eq:epsilon_condt}. Therefore it follows from Lemma 1 that the solutions of the rotor-fuselage error dynamics are uniformly ultimately bounded and the ultimate bound is given by \eqref{eq:ultimate_bound}.

\end{proof}

%%%%%%%%%%%%%%%%%%%%%%%%%%%%%%%%%%%%%%%%%%%%%%%%%%%%%%%%%%%%%%%%%%%%%%%%%%%%%%%%%%%%%%%%%%%%%%%

\begin{remark}
$\epsilon_f$ and $\epsilon_r$ could be independently set based on the uncertainties associated with the fuselage and rotor dynamics. This is an important design flexibility for a helicopter since it allows for adjusting the robustness of the controller for exogenous torque independent of uncertainties in rotor parameters. The exogenous torque depends on the type of mission the helicopter flies (e.g. externally attached payload, cable suspended load), while the rotor parameters remain constant for a given rotor hub and blade properties.
\end{remark}

\begin{remark}
The proposed controller requires flap angle feedback, which in the case of a large scale helicopter is relatively easy to be measured as described in \cite{kufeld1994flight}. Whereas, for a small scale helicopter the instrumentation required for flap angle measurement is challenging because of limited space and the rotor being hingeless in flap. However, an observer for the flap angle can be implemented with the assumption that the remaining states are available. The attitude and angular velocity can be independently estimated using onboard inertial measurement unit as proposed in \cite{mahony2008nonlinear}.
\end{remark}

%%%%%%%%%%%%%%%%%%%%%%%%%%%%%%%%%%%%%%%%%%%%%%%%%%%%%%%%%%%%%%%%%%%%%%%%%%%%%%%%%%%%%%%%%%
%%%%%%%%%%%%%%%%%%%%%%%%%%%%%%%%%%%%%%%%%%%%%%%%%%%%%%%%%%%%%%%%%%%%%%%%%%%%%%%%%%%%%%%%%%

\section{Structure Preserving Robust Controller} \label{sec:str_rob_cont}
In this section we introduce a structure preserving robust attitude tracking controller which achieves almost globally asymptotic stability. The idea here is to preserve the damping term (due to $-K\omega$) inherently present in the rotor dynamics. The backstepping technique used to derive the controller introduced in the previous section necessitated the removal of this term, and introduced artificial damping in the fuselage error dynamics \eqref{eq:moment_des}  through the $-k_{\omega}\tilde{e}_{\omega}$ term. The perfect knowledge of rotor time constant, $\tau_m$, played a crucial role in removal of the damping term $-K\omega$, and therefore its uncertainty resulted in injection of energy into the system through a positive feedback of $\omega$, thus making the system unstable. The present controller avoids this issue by utilizing the inherent damping in the rotor dynamics and doing away with the artificial damping term altogether, thus making it an angular velocity free attitude tracking controller.

The notion of almost global asymptotic stability (AGAS) for mechanical systems defined on non-Euclidean space was introduced by Koditschek in \cite{koditschek1989application}. Due to the topology of $SO(3)$, irrespective of the controller used, it is not possible to make the desired equilibrium globally asymptotically stable using continuous control input as shown in \cite{koditschek1989application,bhat2000topological}. But the region of attraction of the desired equilibrium can be made as large as the entire state space excluding a set of measure zero. 

The proof of AGAS property is outlined in the following three steps \cite{chaturvedi2009asymptotic, bayadi2014almost}.
\begin{enumerate}
\item Define a configuration error function \cite{koditschek1989application, bullo2004geometric} on $SO(3)$ with certain properties. The proportional action of the controller is derived from this function. 

\item Linearize the error dynamics about all equilibria to show that the desired equilibrium is asymptotically stable and the rest are unstable. 

\item Use invariance principle like arguments to show AGAS of the desired equilibrium.
\end{enumerate} 
To show AGAS, the configuration error function should be such that a) it has minimum number of isolated critical points as allowed by the topology of the configuration space (for $SO(3)$ it is 4),
b) the associated gradient vector field be asymptotically stable only on the desired configuration and unstable for the rest.
The modified trace function $\psi_m: SO(3) \to \mathbb{R}$ introduced by Chillingworth \cite{chillingworth1982symmetry} and later used by Koditschek \cite{koditschek1989application} satisfies these properties as given in the following lemma.
\begin{lemma} (Chillingworth \cite{chillingworth1982symmetry}).
If $P \in \mathbb{R}^{3 \times 3}$ is symmetric positive definite with distinct eigenvalues, then the configuration error function
\begin{equation}
\psi_m \triangleq \frac{1}{2} tr(P(I-R_e))
\end{equation}
has exactly 4 critical points.
\end{lemma}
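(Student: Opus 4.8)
The plan is to find the critical points of $\psi_m$ by a direct first-variation computation on $SO(3)$ and then solve the resulting matrix equation. Writing $\psi_m(R_e) = \tfrac12\mathrm{tr}(P) - \tfrac12\mathrm{tr}(PR_e)$ and varying $R_e$ along $t\mapsto R_e\exp(t\hat\eta)$ with $\eta\in\mathbb{R}^3$, I would obtain
\[
\left.\frac{d}{dt}\right|_{t=0}\psi_m\bigl(R_e\exp(t\hat\eta)\bigr) = -\tfrac12\,\mathrm{tr}(PR_e\hat\eta) = \tfrac12\,(PR_e - R_e^{T}P)^{\vee}\cdot\eta,
\]
using the identity $\mathrm{tr}(A\hat\eta) = -(A - A^{T})^{\vee}\cdot\eta$ together with $P = P^{T}$. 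Since $\eta$ is arbitrary and $PR_e - R_e^{T}P$ is skew-symmetric, $R_e$ is a critical point if and only if $PR_e = R_e^{T}P$, equivalently if and only if the matrix $PR_e$ is symmetric.

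Next I would diagonalize. Pick $V\in SO(3)$ with $P = V\Lambda V^{T}$, where $\Lambda = \mathrm{diag}(\lambda_1,\lambda_2,\lambda_3)$ has distinct positive entries, and set $Q \triangleq V^{T}R_eV\in SO(3)$. Since $\mathrm{tr}(PR_e) = \mathrm{tr}(\Lambda Q)$ and $R_e\mapsto Q$ is a diffeomorphism of $SO(3)$, the critical points of $\psi_m$ are in bijection with the $Q\in SO(3)$ for which $\Lambda Q$ is symmetric. Setting $S\triangleq\Lambda Q$, we have $S = S^{T}$ and $S$ invertible (with $\det S = \lambda_1\lambda_2\lambda_3\neq 0$), so $Q = \Lambda^{-1}S$; the orthogonality relation $Q^{T}Q = I$ then reads $S\Lambda^{-2}S = I$, which I would rearrange (conjugating by $S^{-1}$) into $S^{2} = \Lambda^{2}$.

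The crucial point is now that $S$ is symmetric while $S^{2} = \Lambda^{2} = \mathrm{diag}(\lambda_1^{2},\lambda_2^{2},\lambda_3^{2})$ is diagonal with \emph{distinct} entries; since $S$ commutes with $S^{2}$, and any matrix commuting with a diagonal matrix having distinct diagonal entries is diagonal, $S$ itself is diagonal. Hence $S = \mathrm{diag}(s_1,s_2,s_3)$ with $s_i^{2} = \lambda_i^{2}$, i.e. $s_i = \pm\lambda_i$, so $Q = \Lambda^{-1}S = \mathrm{diag}(\pm1,\pm1,\pm1)$; imposing $\det Q = 1$ leaves exactly the four matrices $I$, $\mathrm{diag}(1,-1,-1)$, $\mathrm{diag}(-1,1,-1)$, $\mathrm{diag}(-1,-1,1)$ — the identity and the three rotations through $\pi$ about the eigenaxes of $P$. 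Conjugating back by $V$ gives exactly four critical points of $\psi_m$, and they are isolated because they are finitely many.

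I expect the only delicate step to be the reduction $S\Lambda^{-2}S = I \Rightarrow S^{2} = \Lambda^{2}$ and the ensuing commutation argument forcing $S$ to be diagonal — which is precisely where the hypothesis of distinct eigenvalues of $P$ enters. The first-variation computation and the manipulations with the hat/vee maps are routine, the one pitfall being the correct sign in $\mathrm{tr}(A\hat\eta) = -(A - A^{T})^{\vee}\cdot\eta$. (A different, equally valid route that I would not follow here lifts $\psi_m$ to a quadratic form on the unit quaternions $S^{3}$: its critical points on $SO(3)\cong\mathbb{RP}^{3}$ are the images of the four eigendirections of an associated symmetric $4\times4$ matrix, which are distinct exactly when $P$ has distinct eigenvalues.)
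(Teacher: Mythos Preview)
Your argument is correct and complete: the first-variation computation correctly yields the critical-point condition $PR_e = R_e^{T}P$, the diagonalization reduces this to $\Lambda Q$ symmetric, and the chain $S\Lambda^{-2}S = I \Rightarrow S^{2} = \Lambda^{2} \Rightarrow S$ diagonal (via the commutation argument, which is exactly where distinctness of the eigenvalues is used) is valid. The sign in the trace identity is handled correctly.

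However, there is nothing in the paper to compare against: the paper does not prove this lemma at all. It is stated with attribution to Chillingworth, and immediately afterwards the paper simply records the critical set $\Theta = \{I, e^{\pi\hat v_1}, e^{\pi\hat v_2}, e^{\pi\hat v_3}\}$ with a page reference to Bullo and Lewis. So your write-up goes well beyond what the paper offers --- you supply a self-contained derivation where the paper defers entirely to the literature. Your identification of the four critical points as the identity together with the three $\pi$-rotations about the eigenaxes of $P$ agrees with the form the paper quotes.
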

The critical points of $\psi_m$ are $\Theta = \{I, e^{\pi \hat{v}_1}, e^{\pi \hat{v}_2}, e^{\pi \hat{v}_3} \}$, where $v_1, v_2, v_3$ are the eigenvectors of $P$ (p. 553 \cite{bullo2004geometric}). As shown in Eq \eqref{eq:prop_cont}, the proportional control term is obtained from the error corresponding to the modified trace function $\psi_m$ as
\begin{equation}
e_{Rm} = \frac{1}{2} (PR_e - R_e^T P)^\vee.
\end{equation}
With the proportional term defined, the structure preserving controller is given by 
\begin{equation} \label{eq:control_input1}
\theta = (KA_\tau)^{-1}(-AM_d + \dot{M}_d + KR_e^T\omega_d),
\end{equation}
where
\begin{equation} \label{eq:moment_des1}
\begin{gathered}
M_d = -k_R e_{Rm} + \omega \times J \omega
 -J(\hat{e}_\omega R_e^T \omega_d - R_e^T \dot{\omega}_d). \\
\end{gathered}
\end{equation}
By substituting the above controller \eqref{eq:control_input1}, \eqref{eq:moment_des1} in \eqref{eq:rotor-fuse}, with $\Delta_f \equiv 0$, the following error dynamics is obtained
\begin{subequations} \label{eq:err_dym1}
\begin{equation}  \label{eq:err_dym1_kin}
\dot{R}_e = R_e e_\omega
\end{equation}
\begin{equation}  \label{eq:err_dym1_fus}
J\dot{e}_\omega = -k_R e_{Rm} + e_M
\end{equation}
\begin{equation} \label{eq:err_dym1_rot}
\dot{e}_M = A e_M - K e_\omega
\end{equation} 
\end{subequations}
The above error dynamics has 4 equilibria given by $\chi = \{ (R_{eq} \in \Theta, e_{\omega} = 0, e_M = 0)  \}$, with $(I,0,0)$ being the desired equilibrium. To show the local stability properties, the error dynamics is linearized about each equilibrium. In case of a general smooth manifold, the vector field is mapped to the coordinate chart and then linearized \cite{bayadi2014almost}. Since the configuration manifold in our case is a Lie group, we use the simplified approach followed by Chaturvedi et al \cite{chaturvedi2009asymptotic}. Using the exponential map, a small perturbation to the given equilibrium $(R_{eq},0,0)$ is represented by $(R(\epsilon),\epsilon \bar{e}_{\omega}, \epsilon \bar{e}_M)$ with $R(\epsilon) = R_{eq}e^{\epsilon \hat{\bar{\eta}}}$, for a small $\epsilon \in \mathbb{R}$ and linearization variables $\bar{\eta}, \bar{e}_{\omega}, \bar{e}_M \in \mathbb{R}^3$. The linearization is done by substituting the perturbed states to the error dynamics \eqref{eq:err_dym1} and differentiating w.r.t $\epsilon$ at $\epsilon = 0$. 
\begin{lemma}
The linearization of the error dynamics \eqref{eq:err_dym1} is given by 
\begin{equation} \label{eq:linerr_dym1}
\frac{d}{dt} \begin{bmatrix}
\bar{\eta} \\
\bar{e}_\omega \\
\bar{e}_M 
\end{bmatrix} = 
\underbrace{
\begin{bmatrix}
\pmb{0} & \pmb{I} & \pmb{0}  \\
-J^{-1} k_R B(R_{eq}) & \pmb{0} & J^{-1} \\
\pmb{0} & -K & A  
\end{bmatrix}}_{\textstyle S(R_{eq})}
\begin{bmatrix}
\bar{\eta} \\
\bar{e}_\omega \\
\bar{e}_M 
\end{bmatrix}
\end{equation}
where $B(R_{eq}) = -\frac{1}{2}\sum_{i=1}^3 \hat{e}_i P R_{eq} \hat{e}_i$, and $\pmb{I}$ is the $3\times 3$ identity matrix. 
\end{lemma}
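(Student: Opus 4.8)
The plan is to linearize the three equations of \eqref{eq:err_dym1} one at a time, substituting the perturbed trajectory $R_e = R_{eq}e^{\epsilon\hat{\bar\eta}(t)}$, $e_\omega = \epsilon\bar{e}_\omega(t)$, $e_M = \epsilon\bar{e}_M(t)$ and differentiating in $\epsilon$ at $\epsilon=0$. Two elementary facts will do most of the work: (i) $\left.\frac{\partial}{\partial\epsilon}\right|_{\epsilon=0}e^{\epsilon\hat{\bar\eta}} = \hat{\bar\eta}$, hence $\left.\frac{\partial}{\partial\epsilon}\right|_{\epsilon=0}R_e = R_{eq}\hat{\bar\eta}$ and $\left.\frac{\partial}{\partial\epsilon}\right|_{\epsilon=0}R_e^T = -\hat{\bar\eta}R_{eq}^T$; and (ii) the vee identity $(A\hat{x}+\hat{x}A^T)^\vee = (tr(A)I - A^T)x$ for any $A\in\mathbb{R}^{3\times3}$, $x\in\mathbb{R}^3$, which is exactly what was used implicitly to obtain $B(R_e)$ in \eqref{eq:eR_dot}, together with its companion $\sum_{i=1}^3\hat{e}_iM\hat{e}_i = M^T - tr(M)I$. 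Both identities are linear in their matrix argument and can be verified on a basis of $\mathbb{R}^{3\times3}$.

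For the kinematic equation, since $e^{\epsilon\hat{\bar\eta}} = I + O(\epsilon)$ the right-hand side becomes $R_e\hat{e}_\omega = \epsilon R_{eq}\hat{\bar{e}}_\omega + O(\epsilon^2)$, while the left-hand side is $\dot{R}_e = R_{eq}\,\partial_t e^{\epsilon\hat{\bar\eta}} = \epsilon R_{eq}\hat{\dot{\bar\eta}} + O(\epsilon^2)$; equating first-order terms gives $\dot{\bar\eta} = \bar{e}_\omega$, the first block row of $S(R_{eq})$. For the fuselage equation the only nontrivial term is $e_{Rm} = \frac12(PR_e - R_e^TP)^\vee$. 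Because $R_{eq}\in\Theta$ is a critical point of $\psi_m$, $e_{Rm}(R_{eq})=0$, so the zeroth-order part vanishes and the equilibrium is consistent. Differentiating and using fact (i) yields $\left.\frac{\partial}{\partial\epsilon}\right|_{\epsilon=0}e_{Rm} = \frac12\bigl(PR_{eq}\hat{\bar\eta} + \hat{\bar\eta}R_{eq}^TP\bigr)^\vee$; setting $A = PR_{eq}$ (so $A^T = R_{eq}^TP$ by symmetry of $P$) and applying (ii) gives $\frac12\bigl(tr(PR_{eq})I - R_{eq}^TP\bigr)\bar\eta$, which by the companion identity applied to $M=PR_{eq}$ is precisely $B(R_{eq})\bar\eta = -\frac12\sum_{i=1}^3\hat{e}_iPR_{eq}\hat{e}_i\,\bar\eta$. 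Multiplying \eqref{eq:err_dym1_fus} through by $J^{-1}$ then gives $\dot{\bar{e}}_\omega = -J^{-1}k_RB(R_{eq})\bar\eta + J^{-1}\bar{e}_M$, the second block row. The rotor equation \eqref{eq:err_dym1_rot} is already linear in $(e_\omega,e_M)$, so it contributes $\dot{\bar{e}}_M = A\bar{e}_M - K\bar{e}_\omega$ verbatim; stacking the three rows assembles $S(R_{eq})$.

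The main obstacle is the fuselage step: carrying the $\epsilon$-derivative through the $\vee$ of the matrix expression for $e_{Rm}$ and then recognizing the outcome as the stated conjugation sum $-\frac12\sum_i\hat{e}_iPR_{eq}\hat{e}_i$; everything else is first-order bookkeeping. I would therefore isolate the two trace/vee identities as brief standalone calculations so that the linearization proper stays short, and I would explicitly note $e_{Rm}(R_{eq})=0$ to justify dropping the constant term at each of the four equilibria.
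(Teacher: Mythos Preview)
Your proposal is correct and follows essentially the same approach as the paper: perturb the equilibrium via the exponential map and differentiate in $\epsilon$ at $\epsilon=0$. The only cosmetic difference is in the computation of $\left.\partial_\epsilon e_{Rm}\right|_{\epsilon=0}$: the paper first rewrites $e_{Rm}=\tfrac12\sum_i \hat{e}_i PR_e e_i$ using $(Q-Q^T)^\vee=\sum_i e_i\times Qe_i$ and then differentiates (invoking $\hat{\bar\eta}e_i=-\hat{e}_i\bar\eta$), whereas you differentiate first and pass through the intermediate trace form $\tfrac12(\mathrm{tr}(PR_{eq})I - R_{eq}^TP)\bar\eta$ before recognizing the conjugation sum; both routes are equivalent one-line identity manipulations.
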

\begin{proof}
See Appendix \ref{ap:lin}.
\end{proof}
Note that, the stability matrix $S(R_{eq})$ depends only on the critical points of $\psi_m$ in $\Theta$. The eigenvalues of $S(R_{eq})$ reveal that for all $k_R > 0$, each equilibrium is hyperbolic and the desired equilibrium is  asymptotically stable while the rest are unstable. The center manifold theorem (Theorem 3.2.1, \cite{guckenheimer2013nonlinear}) states that the tangent space to the unstable and stable manifold of the equilibrium coincide with the unstable and stable eigen spaces of the linearized system, respectively. Further, these manifolds are invariant with respect to the flow of the vector field. Therefore, the stable manifolds of the unstable equilibria are lower dimensional than the total space $SO(3)\times \mathbb{R}^3 \times \mathbb{R}^3$, thus making them a set of measure zero. Now we present the main result of the structure preserving controller.

\begin{theorem}
For $k_R > 0$, the desired equilibrium $(I,0,0)$ of the error dynamics \eqref{eq:err_dym1} is almost globally asymptotically stable.
\end{theorem}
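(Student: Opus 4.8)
The plan is to follow the three-step scheme already outlined (configuration error function, linearization at every equilibrium, invariance-principle wrap-up), the new ingredient being a \emph{weak} Lyapunov function that keeps the rotor damping instead of cancelling it. Take
\[
V \triangleq k_R\,\psi_m(R_e) + \tfrac12\, e_\omega\cdot Je_\omega + \tfrac12\, e_M\cdot K^{-1}e_M .
\]
On $SO(3)$ one has $\psi_m\ge 0$ with $\psi_m=0$ only at $R_e=I$, so (since $SO(3)$ is compact and $V$ is radially unbounded in $(e_\omega,e_M)$) the sublevel sets of $V$ are compact and positively invariant. Differentiating along \eqref{eq:err_dym1} and using $\dot\psi_m=e_{Rm}\cdot e_\omega$ (the analogue of the computation preceding \eqref{eq:prop_cont} with $P$ inserted), the proportional cross-term $k_R e_{Rm}\cdot e_\omega$ and the interconnection cross-terms $\pm\,e_\omega\cdot e_M$ cancel pairwise, leaving $\dot V = e_M\cdot K^{-1}Ae_M$. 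Writing $A=-A_\tau+A_k$, the matrix $K^{-1}A_k$ is skew-symmetric -- this is exactly where the equality of the first two diagonal entries of $K$ enters -- so $\dot V=-\,e_M\cdot K^{-1}A_\tau e_M\le 0$, with equality iff $e_M=0$. Thus it is the inherent rotor damping that drives $V$ downhill.

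Next I would obtain global convergence to the equilibrium set $\chi$ by LaSalle. Boundedness of all solutions is immediate from the compact sublevel sets. On $\{\dot V=0\}=\{e_M=0\}$, a solution that stays there must, by \eqref{eq:err_dym1_rot}, satisfy $Ke_\omega\equiv 0$, hence $e_\omega\equiv 0$; then \eqref{eq:err_dym1_fus} gives $e_{Rm}\equiv 0$, i.e. $R_e\in\Theta$. So the largest invariant subset of $\{\dot V=0\}$ is precisely $\chi$, and the LaSalle invariance principle puts every $\omega$-limit set inside $\chi$. Since the four points of $\chi$ are isolated and the $\omega$-limit set of a bounded solution is connected, every trajectory converges to a single point of $\chi$.

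It remains to separate $(I,0,0)$ from the other three equilibria. Near $(I,0,0)$, $V$ is positive definite (strict local minimum of $\psi_m$ at $I$), so the local LaSalle argument already gives local asymptotic stability; consistently, $B(I)=\tfrac12(\operatorname{tr}(P)I-P)\succ 0$ makes the linearization $S(I)$ in \eqref{eq:linerr_dym1} Hurwitz for every $k_R>0$. For a $\pi$-rotation $R_{eq}=e^{\pi\hat v_j}$ one computes $B(R_{eq})v_j=-\tfrac12(\operatorname{tr}(P)-p_j)v_j$ where $Pv_j=p_jv_j$, a strictly negative eigenvalue since $P\succ 0$; eliminating $\bar e_\omega,\bar e_M$ in \eqref{eq:linerr_dym1} shows the eigenvalues of $S(R_{eq})$ are exactly the roots of $\det\!\big((sI-A)(s^2J+k_R B(R_{eq}))+sK\big)=0$, a degree-nine polynomial. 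A Routh--Hurwitz analysis of this polynomial -- transparent in the scalar/decoupled reduction, where the characteristic equation is $(s-a)(s^2 j+k_R\beta)+s\kappa=0$ with $a=-1/\tau<0$, hence Hurwitz iff the $B$-eigenvalue $\beta>0$ (for all $k_R>0$, the constant term being $(-a)k_R\beta$) -- yields that $S(R_{eq})$ is hyperbolic with at least one eigenvalue in the open right half-plane. Thus the three non-identity equilibria are hyperbolic with non-trivial unstable subspaces.

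Finally, assemble the pieces: by the stable-manifold theorem each $R_{eq}\in\Theta\setminus\{I\}$ has a stable manifold $W^s(R_{eq})$ of dimension strictly below $\dim\big(SO(3)\times\mathbb{R}^3\times\mathbb{R}^3\big)=9$, hence of zero Lebesgue measure; the finite union $\bigcup_{R_{eq}\in\Theta\setminus\{I\}}W^s(R_{eq})$ then has measure zero. Every trajectory outside this union converges (by the LaSalle step above) to $(I,0,0)$, and combined with the local asymptotic stability of $(I,0,0)$ this is exactly almost global asymptotic stability. I expect the genuine obstacle to be the eigenvalue step: establishing hyperbolicity and the sign pattern of the spectrum of $S(R_{eq})$ uniformly over all $k_R>0$ in the presence of rotor--fuselage coupling (a non-diagonal $J$, and cross-channel coupling when the eigenaxes of $P$ are not the body axes), since once hyperbolicity is in hand the LaSalle and measure-zero arguments are routine.
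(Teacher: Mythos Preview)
Your proposal is correct and follows essentially the same route as the paper: the identical Lyapunov function $V=k_R\psi_m+\tfrac12 e_\omega\cdot Je_\omega+\tfrac12 e_M\cdot K^{-1}e_M$, the same cancellation yielding $\dot V=e_M\cdot K^{-1}Ae_M\le 0$ (your skew-symmetry argument for $K^{-1}A_k$ is exactly the paper's observation that the first two diagonal entries of $K$ coincide), the same LaSalle chain $e_M\equiv 0\Rightarrow e_\omega\equiv 0\Rightarrow e_{Rm}\equiv 0$, and the same hyperbolicity-plus-measure-zero conclusion. The paper simply asserts the spectral properties of $S(R_{eq})$ rather than working through the Routh--Hurwitz details you sketch, so your treatment is in fact more explicit on that point.
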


\begin{proof}

Consider the following candidate Lyapunov function $V = k_R\psi_m(R_e) + \frac{1}{2} e_\omega \cdot J e_\omega + \frac{1}{2}e_M \cdot K^{-1} e_M$. Its derivative along the error dynamics vector field is given by
\begin{equation}
\begin{split}
\dot{V} = k_R e_{Rm} \cdot e_\omega + e_\omega \cdot (e_M - k_R e_{Rm}) \\
 + e_M \cdot K^{-1}(Ae_M - Ke_\omega) \\
 = e_M \cdot K^{-1} A e_M \leq 0.
\end{split}
\end{equation}
The negative definitiveness of $K^{-1}A$ follows from the fact that the upper diagonal 2x2 block of $A$ is a rotation transformation by an obtuse angle and scaling, and $K>0$ is diagonal with first two elements being equal. Since $V$ is continuous and bounded from below and $\dot{V} \leq 0$, the positive limit set of all the trajectories is characterized by $\dot{V} \equiv 0$. The following sequence of arguments show that the union of such limit sets is exactly the set of 4 equilibrium points, $\chi$.
\begin{equation}
\begin{split}
\dot{V} \equiv 0 \implies e_M \equiv 0 \implies \dot{e}_M \equiv 0 \implies e_\omega \equiv 0 \implies \\
 \dot{e}_\omega \equiv 0 \implies e_{Rm} \equiv 0.
\end{split}
\end{equation}
From the above identity and from the arguments made previously, all the initial conditions that start outside the stable manifold of the unstable equilibria converge to the desired equilibrium asymptotically. Since the stable manifolds of the unstable equilibria are of measure zero, the desired equilibrium is almost globally asymptotically stable.
\end{proof}

Now we show the robustness of the above controller using input to state stability (ISS) arguments. The following lemma helps prove the robustness of the proposed controller.
\begin{lemma} \label{lem:rob} (Theorem 7.4, \cite{marquez2003nonlinear}).
Consider the system $\dot{x} = f(x,u)$. Assume that the origin is an asymptotically stable equilibrium point for the autonomous system $\dot{x} = f(x,0)$, and that the function f(x,u) is continuously differentiable. Under these conditions $\dot{x} = f(x,u)$ is locally input to state stable.
\end{lemma}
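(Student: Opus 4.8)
The plan is to derive local ISS from a converse Lyapunov theorem for the unforced system combined with a smoothness-based perturbation estimate. First, since the origin is asymptotically stable for $\dot x = f(x,0)$ and $f(\cdot,0)$ is $C^1$, hence locally Lipschitz, a converse Lyapunov theorem of Massera--Kurzweil type (as in Theorem~4.16 of \cite{khalil2002nonlinear}) supplies a continuously differentiable function $V$ defined on a ball $B_r = \{\norm{x} < r\}$ and class-$\mathcal{K}$ functions $\alpha_1,\alpha_2,\alpha_3$ satisfying $\alpha_1(\norm{x}) \le V(x) \le \alpha_2(\norm{x})$ and $\frac{\partial V}{\partial x} f(x,0) \le -\alpha_3(\norm{x})$ for all $x \in B_r$.

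Next I would evaluate $\dot V$ along trajectories of the forced system $\dot x = f(x,u)$ by splitting $\dot V = \frac{\partial V}{\partial x} f(x,0) + \frac{\partial V}{\partial x}\big(f(x,u) - f(x,0)\big)$. Because $f$ is continuously differentiable, on any compact set $\bar B_{r'} \times \bar B_\rho$ with $r' < r$ there exist constants $L,k > 0$ with $\norm{f(x,u) - f(x,0)} \le L\norm{u}$ and $\norm{\partial V/\partial x} \le k$, so that $\dot V \le -\alpha_3(\norm{x}) + kL\norm{u}$ on that set. The standard ISS-Lyapunov argument then applies: whenever $\norm{x} \ge \chi(\norm{u}) \triangleq \alpha_3^{-1}\!\big(2kL\norm{u}\big)$ one has $\dot V \le -\tfrac12 \alpha_3(\norm{x})$, so $V$ is strictly decreasing outside the ``gain ball''; a comparison-lemma estimate then converts this into a bound $\norm{x(t)} \le \beta(\norm{x(t_0)}, t - t_0) + \gamma(\norm{u}_\infty)$ with $\beta \in \mathcal{KL}$ and $\gamma \in \mathcal{K}$, which is precisely the definition of (local) input-to-state stability.

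The point needing care, and the main obstacle, is the bookkeeping of domains: the converse Lyapunov function is only defined near the origin, the bound on $f(x,u) - f(x,0)$ is only a \emph{local} Lipschitz estimate in $u$, and $\alpha_3$ need not be invertible on all of $[0,\infty)$. I would therefore fix a sublevel set $\Omega_c = \{V \le c\}$ with $c$ small enough that $\Omega_c \subset B_{r'}$, together with an input bound $\norm{u}_\infty \le \rho_0$ small enough that the gain ball $\{\norm{x} \le \chi(\rho_0)\}$ lies inside $\Omega_c$; one then verifies that trajectories starting in $\Omega_c$ under such inputs never leave $B_{r'}$, so all the estimates above remain valid for all time. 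With these choices the $\mathcal{KL} + \mathcal{K}$ bound holds for every $x(t_0) \in \Omega_c$ and every admissible $u$, establishing local ISS. Everything else is routine.
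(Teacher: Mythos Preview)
The paper does not actually prove this lemma: it is stated as a quotation of Theorem~7.4 in \cite{marquez2003nonlinear} and used as a black box, with no argument supplied. So there is no ``paper's own proof'' to compare against.

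That said, your proposal is the standard textbook argument for this result and is essentially how it is proved in the cited reference (and, equivalently, in Lemma~4.6 of \cite{khalil2002nonlinear}): obtain a converse Lyapunov function for the unforced system, use $C^1$ regularity of $f$ to bound the perturbation term $\tfrac{\partial V}{\partial x}\bigl(f(x,u)-f(x,0)\bigr)$ linearly in $\norm{u}$ on a compact neighbourhood, and then run the ISS-Lyapunov comparison. Your attention to the domain bookkeeping---restricting to a sublevel set $\Omega_c$ and a small input bound so that trajectories remain where all estimates are valid---is exactly the subtlety that makes the conclusion only \emph{local} ISS, and you have identified it correctly. There is nothing missing.
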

An uncertainty in rotor time constant modifies the rotor error dynamics \eqref{eq:err_dym1_rot} as
\begin{equation} \label{eq:err_dym2_rotor}
\dot{e}_M = A e_M - K e_\omega + \Delta_r,
\end{equation}
where $\Delta_r = (A_{\tau} \bar{A}_{\tau}^{-1} - I)(-AM_d + \dot{M}_d + KR_e^T\omega_d)$. Note that $\Delta_r$ is bounded as $\omega_d$ and $e_R$ are bounded. Since the error dynamics without the disturbance $\Delta_r$ is shown to be asymptotically stable, it follows form Lemma \ref{lem:rob} and uncertainty bound \eqref{eq:A_norm} that a small uncertainty in rotor time constant would result in a small tracking error.

\begin{remark}
The error dynamics \eqref{eq:err_dym1} has the same structure as that of the helicopter dynamics \eqref{eq:rotor-fuse} except for the gyroscopic term $\omega \times J \omega$ and the feedback term $-k_R e_{Rm}$. The gyroscopic term, due to its energy conserving nature, does not change the damping characteristic of the system. As a result the closed loop system retains the same damping characteristic of the actual helicopter as depicted in Fig. \ref{fig:damping}. This is an advantage as the inherent damping present in the system is very large and relieves the controller from introducing artificial damping which makes it angular velocity feedback free. This distinguishes the present controller from pure rigid body control.
\end{remark}
\begin{remark}
The controller proposed in this section eliminates the major drawback of the robust controller in Sec \ref{sec:rob_cont} as it does not require flap angle and angular velocity for feedback. The only feedback term is $-k_R e_{Rm}$, which only depends on the easily measurable current attitude $R(t)$. This is particularly important for severe aerobatic maneuvers as in such cases the angular velocity about the minimum inertia axis ($X_b$) is corrupted with high rotor vibration noise.
\end{remark}

Due to its ease of implementation, the performance of the above controller is validated with extensive experiments and is provided in Sec \ref{sec:Exp}.

%%%%%%%%%%%%%%%%%%%%%%%%%%%%%%%%%%%%%%%%%%%%%%%%%%%%%%%%%%%%%%%%%%%%%%%%%%%%%%%%%%%%%%%%%%
%%%%%%%%%%%%%%%%%%%%%%%%%%%%%%%%%%%%%%%%%%%%%%%%%%%%%%%%%%%%%%%%%%%%%%%%%%%%%%%%%%%%%%%%%%

\section{Simulation Results} \label{sec:sim}

This section presents the comparative performance of the nominal controller ($\mu_f = \mu_r = 0$) and the robust controller presented in Sec \ref{sec:rob_cont}. The tracking controller was simulated for a 10 kg class model helicopter whose parameters are given in Table \ref{tab:heli_params}. To study the effect of individual uncertainties and the efficacy of each robustification term ($\mu_f,\mu_r$), independent simulations were carried out for structured and unstructured uncertainty. Next, the uncertainties were applied simultaneously to study their combined effect on the performance of the proposed controller.

For the purpose of uniformity in results and easy comparison, the reference trajectories and the initial conditions were chosen to be identical throughout all simulations. The reference trajectory for tracking was designed such that it requires large control input and is sufficiently fast enough to be termed aggressive. A reasonable such candidate is a sinusoidal roll angle reference with an amplitude of 20 degree and a frequency of 1 Hertz. This maneuver requires about 8 degree cyclic input, which is almost 80 percent of the maximum allowed input for aerobatic helicopters of this class. A random large initial attitude error of 80 deg in pitch angle and 90 deg/s of pitch-rate was prescribed for all simulations. The controller parameters used for simulation are given in Table \ref{tab:simulation_params}.

\begin{table} [tbp!]
\renewcommand{\arraystretch}{1.3}
\caption{Helicopter parameters}
\label{tab:heli_params}
\centering
\begin{tabular}{l l l} 
 \hline
 Parameter & Description & Values  \\
 \hline
 $[J_{xx} J_{yy} J_{zz}]$ & Moment of inertia & [0.095 0.397 0.303] $kg$-$m^2$  \\ 

 $\tau_m$  & Rotor time constant & 0.06 $s$  \\

 $k_\beta$  & Rotor spring constant & 129.09 $N$-$m$ \\

 $I_\beta$  & Blade inertia & 0.0327 $kg$-$m^2$ \\
 
 $\Omega$ & Rotor speed & 157.07 $rad/s$\\
 
 $h$ & Hub distance from c.g & 0.174 $m$  \\ [1ex] 
 \hline
\end{tabular}
\end{table}

\begin{table} [tbp!]
\renewcommand{\arraystretch}{1.3}
\caption{Simulation parameters}
\label{tab:simulation_params}
\centering
\begin{tabular}{l l l} 
 \hline
 Parameter & Description & Values  \\
 \hline
 $k_R$    & BRC P gain   & 2.8  \\ 

 $k_\omega$ & BRC D gain  & 2.5  \\

 $\epsilon_f$ & Fuselage error bound & 0.1  \\

 $\epsilon_M$ & Rotor error bound & 0.1  \\

 $\delta_f$   & Max disturbance torque & 5 $N$-$m$ \\

 $A_d$        & Lumped disturbance torque amplitude & 5 $N$-$m$ \\

 $\Omega_d$   & Lumped disturbance torque frequency & 1.5$\pi$ $rad/s$ \\  
 \hline
\end{tabular}
\end{table}

\begin{figure*} [!t] 
\centering
\subfloat[Nominal controller tracking response \label{fig:str_fail}]{\includegraphics[width=0.5\linewidth,height=10cm]{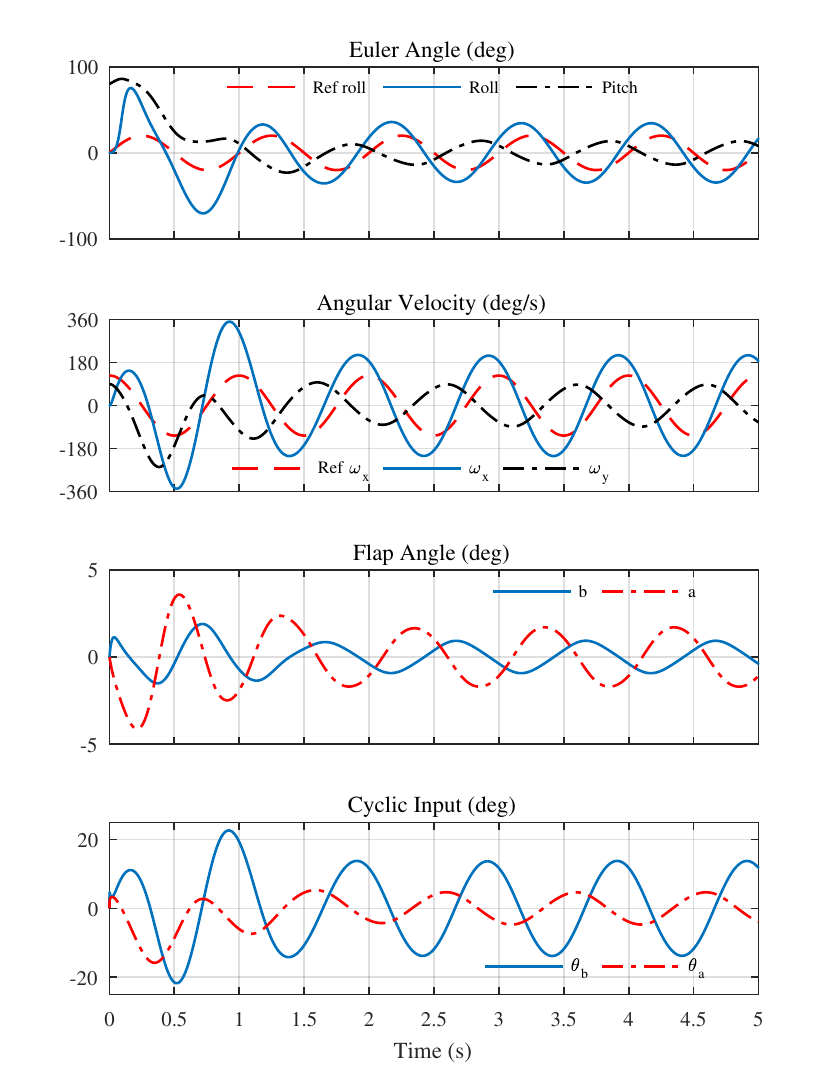}}%
\subfloat[Robust controller tracking response \label{fig:str_good}]{\includegraphics[width=0.5\linewidth,height=10cm]{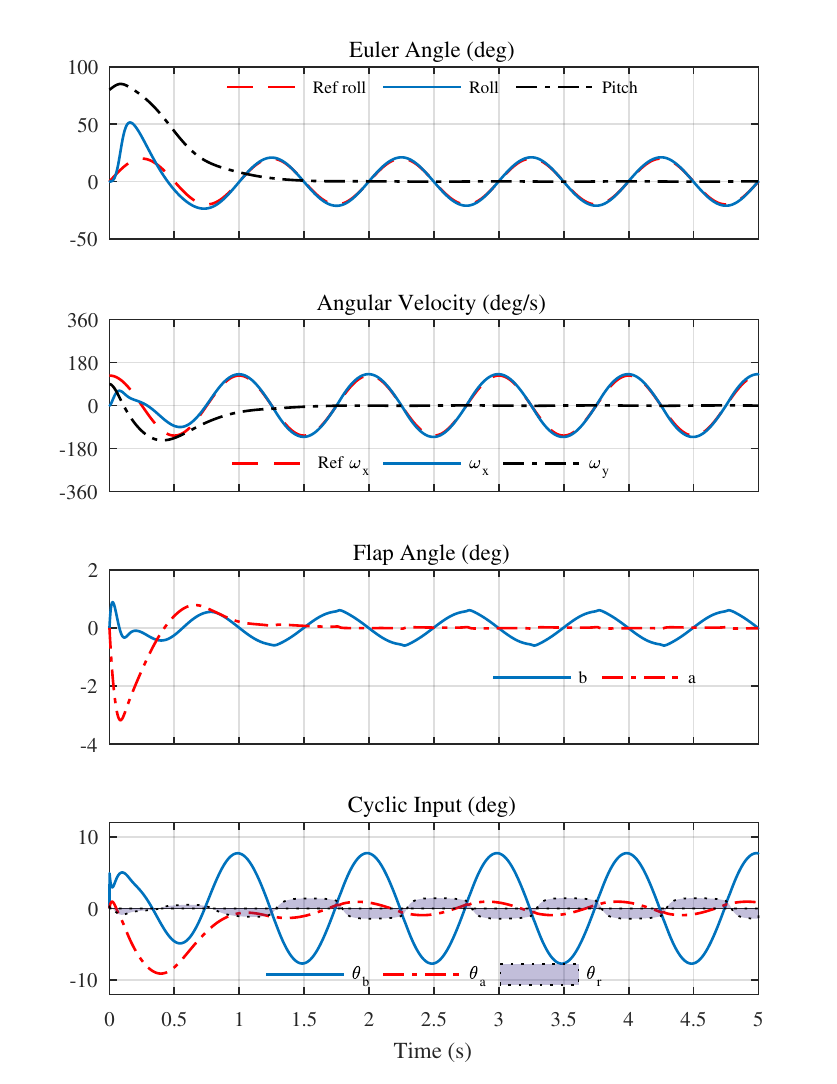}}
\caption{Comparison of Robust and Nominal controller response for structured uncertainty}
\end{figure*}

\begin{figure*} [!t] 
\centering
\subfloat[Nominal controller tracking response \label{fig:unstr_fail}]{\includegraphics[width=0.5\linewidth,height=11cm]{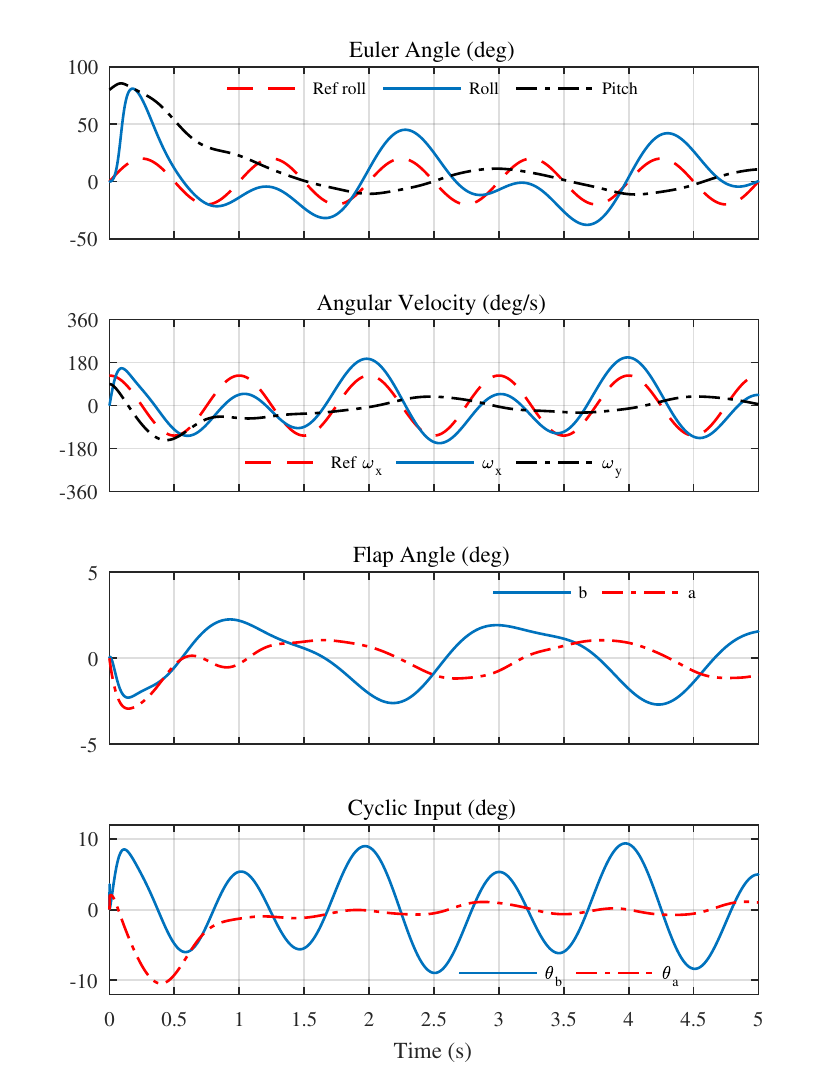}}%
\subfloat[Robust controller tracking response \label{fig:unstr_good}]{\includegraphics[width=0.5\linewidth,height=11cm]{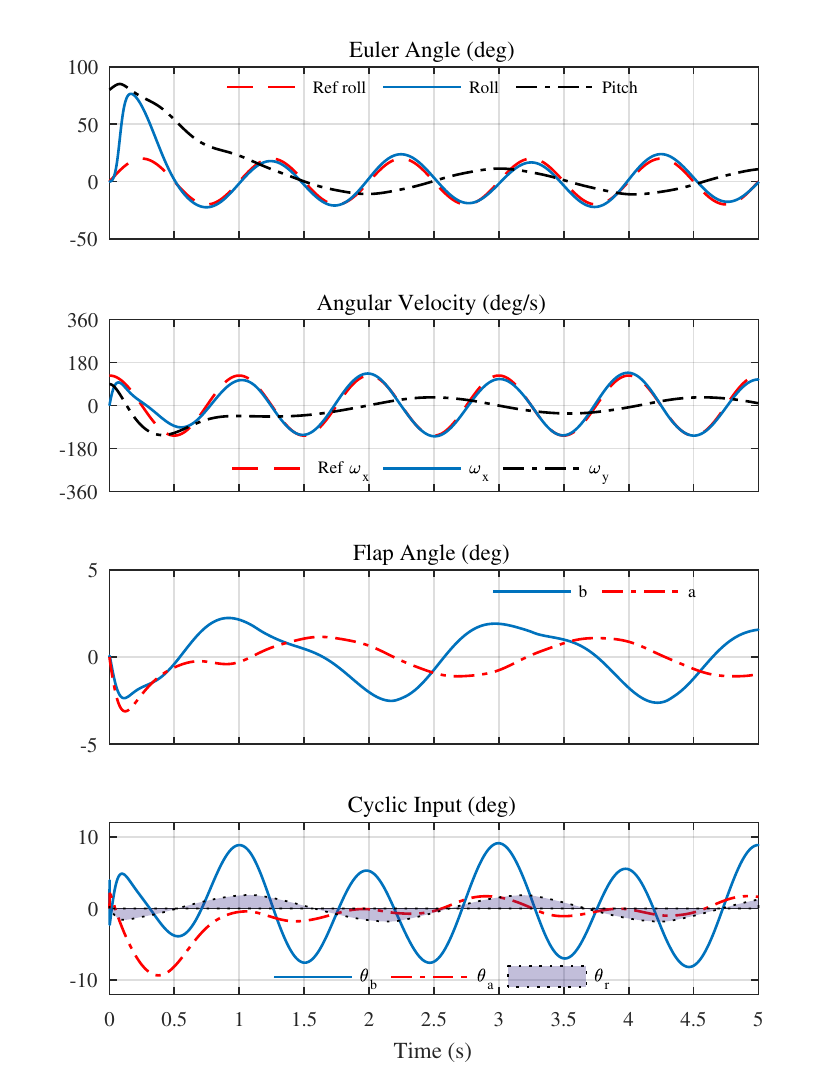}}
\caption{Comparison of Robust and Nominal controller response for unstructured uncertainty}
\end{figure*}

\begin{figure*} [!t] 
\centering
\subfloat[Nominal controller tracking response \label{fig:str_unstr_fail}]{\includegraphics[width=0.5\linewidth,height=11cm]{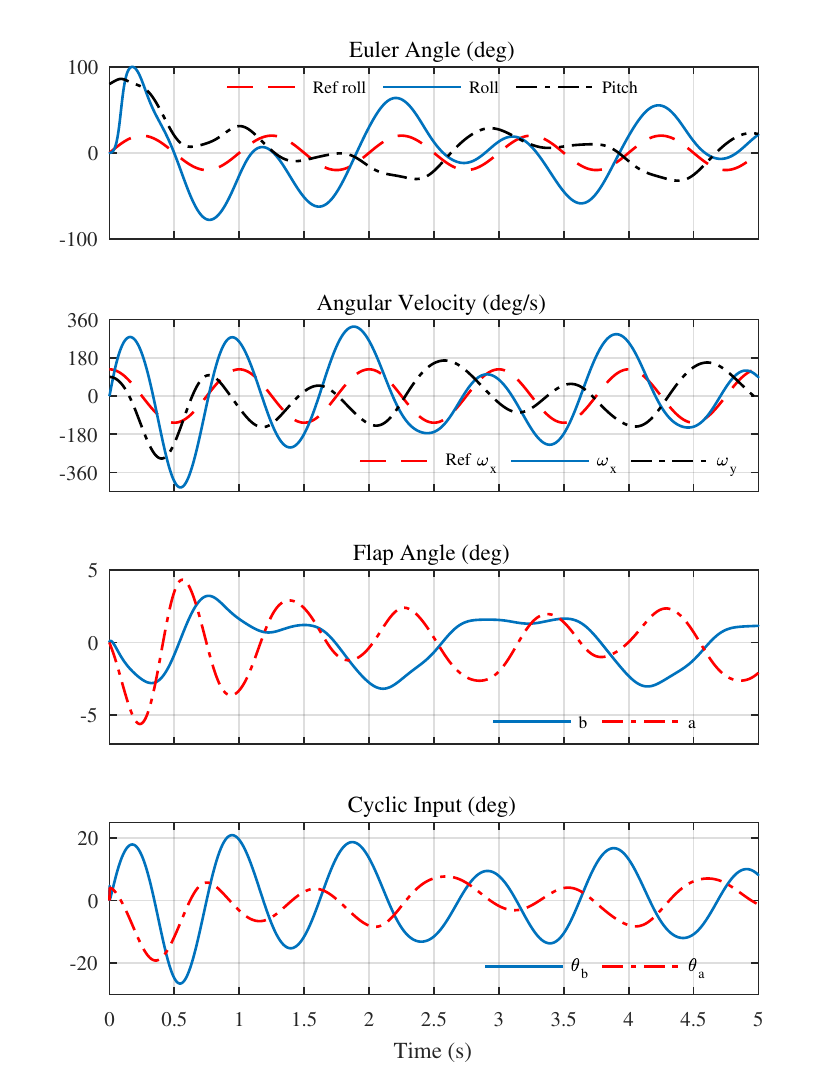}}%
\subfloat[Robust controller tracking response \label{fig:str_unstr_good}]{\includegraphics[width=0.5\linewidth,height=11cm]{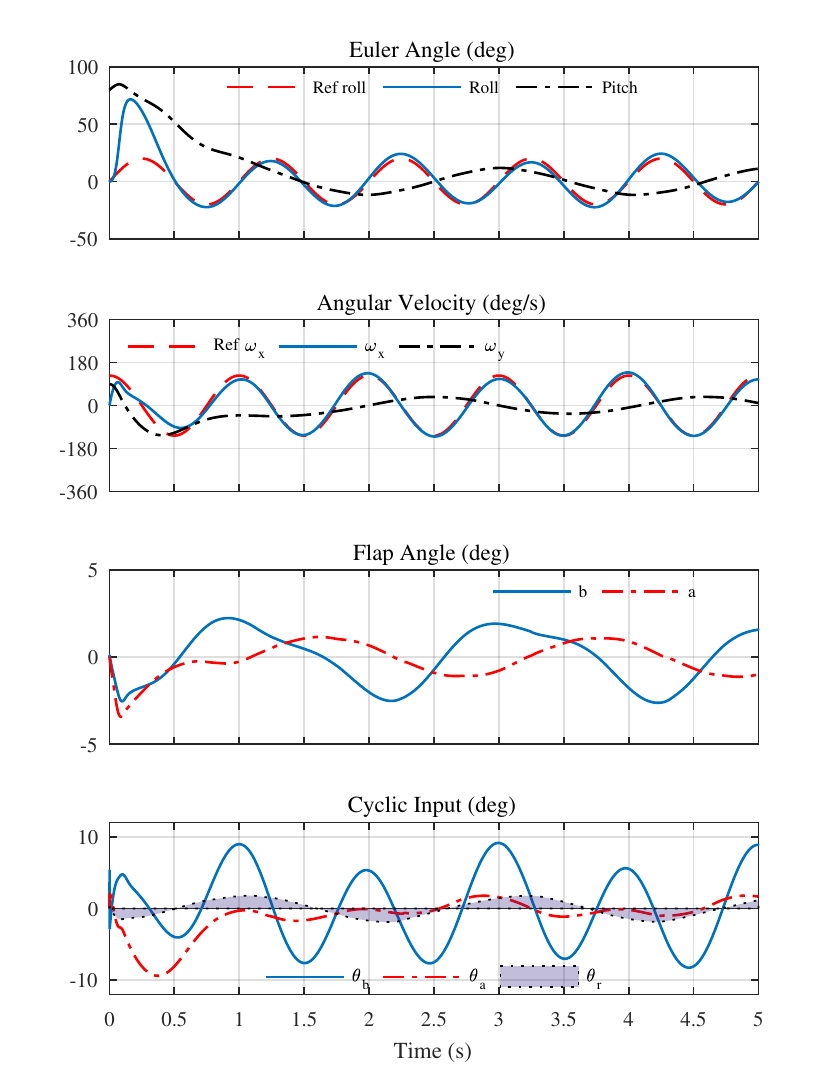}}
\caption{Comparison of Robust and Nominal controller response for combined structured and  unstructured uncertainty}
\end{figure*}

In the case of structured uncertainty, the main rotor time constant used for controller implementation was assumed to be 30 percent more than the simulation model value. This increment would result in a positive feedback of angular velocity thereby injecting energy into the system. As a result, there is a significant tracking error in angle and velocity with the nominal controller as shown in Fig. \ref{fig:str_fail}. On the other hand, the robust controller tracks almost perfectly as is evident from Fig. \ref{fig:str_good}. Moreover, the nominal controller demanded a peak cyclic input of 13.6 degree, which is significantly more than the maximum permissible value of 10 degree. Contrarily, the robust controller demand was well within this bound and the contribution of robust compensation, denoted $\theta_r$, is smooth as shown in Fig. \ref{fig:str_good}.

Unstructured disturbance represents exogenous torque acting on the fuselage primarily due to external payload. A typical case of a swinging under-slung load is simulated here. For the helicopter considered here, a maximum under-slung load equivalent to maximum payload capacity of 3 kg is considered. This results in a maximum torque of 5 N-m for a swing amplitude of 60 degree and any string length. The unknown torque was introduced by adding a lumped disturbance of the form $\Delta_f(t) = A_d[cos(\Omega_d t),0,0]$ to the fuselage. In this case, the performance of the controller with and without the robustification term $\mu_f$ can be compared from Figs. \ref{fig:unstr_fail} and \ref{fig:unstr_good}. It is evident that the robust controller is very effective at nullifying the disturbance torque with modest control requirement.

Figure \ref{fig:str_unstr_good} shows the performance of the proposed robust controller with the combined effect of structured and unstructured disturbances. The nominal controller is completely incapable of tracking the reference command and applies control input much larger than the permissible limit of 10 deg. On the other hand, the robust controller tracks the reference command close enough for all practical purposes with the control input within 10 deg limit. Note that the contribution of the robust compensator is smooth in all the cases considered.

%clearpage
%%%%%%%%%%%%%%%%%%%%%%%%%%%%%%%%%%%%%%%%%%%%%%%%%%%%%%%%%%%%%%%%%%%%%%%%%%%%%%%%%%%%%%
%%%%%%%%%%%%%%%%%%%%%%%%%%%%%%%%%%%%%%%%%%%%%%%%%%%%%%%%%%%%%%%%%%%%%%%%%%%%%%%%%%%%%%
%%%%%%%%%%%%%%%%%%%%%%%%%%%%%%%%%%%%%%%%%%%%%%%%%%%%%%%%%%%%%%%%%%%%%%%%%%%%%%%%%%%%%%

\section{Experimental Results} \label{sec:Exp}
This section presents the experimental validation of the structure preserving controller introduced in Section \ref{sec:str_rob_cont}. The efficacy of the controller is demonstrated by performing single axis flip maneuvers about the roll and pitch axes on a small scale aerobatic helicopter. As the primarily interest in this paper is the coupling between the main rotor and fuselage dynamics, the desired trajectories are specified about roll ($X_b$) and pitch ($Y_b$) axes. Although the maneuvers could be performed about any axes, the specific choice was made to facilitate easy recovery after the maneuver.

\subsection{Desired Trajectory}
The flip maneuvers should meet two conflicting requirements: a) respect the actuator and state constraints of the helicopter, b) aggressive/fast enough for testing the proposed controller and prevent excessive height loss. To this end, the desired rotation trajectories are obtained as a solution to the following optimal control problem:

\begin{equation*}
\underset{u}{\text{min}} 
 \int_{0}^{T_f} u^2 dt \\
\end{equation*}
\begin{align} \label{eq:optimalCP}
%\begin{aligned}
\begin{rcases}
\text{such that}
&  \omega = \dot{\phi}v, \\
&  J\dot{\omega} + \omega\times J\omega = M, \\
&  \dot{M} = AM - K\omega + KA_\tau \theta, \\
&  \dot{\theta} = u  \\
&  \norm{u} \leq u_{max}, \quad \norm{\theta} \leq \theta_{max}
\end{rcases}
\quad \forall t \in [0,T_f]
%\end{aligned}
\end{align}
with boundary conditions
\begin{gather*}
\phi(0) = 0, \quad \phi(T_f) = \phi_{tr}, \quad \omega(0) = 0, \quad \omega(T_f) = 0, \\
\quad M(0) = M_{tr_1}, \quad M(T_f) = M_{tr_2}, \\
 \theta(0) = \theta_{tr_1}, \quad \theta(T_f) = \theta_{tr_2}
\end{gather*}
where $(\cdot)_{tr}$ is the corresponding trim value for hover,  $u$ is the rate of change of blade pitch input, and the attitude is described by the axis-angle representation $(\phi,v)$ 
\begin{equation*}
R_d(\phi,v) = I + \sin\phi\hat{v} + (1-\cos\phi)\hat{v}^2.
\end{equation*}
Since a single axis rotation is considered, $v$ is a constant vector and the optimal attitude trajectory is obtained as time parametrized angle, $\phi(t)$. The proposed optimal control problem allows to include servo speed limit through $u_{max}$, and ensure that the cyclic control input to the helicopter at the beginning and end of the maneuver is its trim value. The maximum cyclic input, $\theta_{max}$, was set to 9.8 deg, which corresponds to a steady state angular velocity of 170 deg/s for the given helicopter. A direct collocation method \cite{betts2010practical} was used to obtain the optimal solution in Matlab.

\subsection{Implementation Details}
The experiments were carried out on Align Trex 700 electric helicopter. It has a rotor diameter of 1.5 meter and weighs 6 kg. The rotor rpm was set to 1500, although a higher rpm would make the vehicle more capable of aerobatic flight. It was instrumented with Pixhawk autopilot board running a modified version of PX4 open source code. The board is equipped with a 3-axis accelerometer,  3-axis gyro and a 3-axis magnetometer together constituting the attitude heading reference system (AHRS). The stock code has an implementation of quaternion based attitude estimator which fuses the data from the AHRS using a complementary filter. The controller was implemented as a separate module and runs at 250 Hz. Due to limited storage capacity of the autopilot ROM, the optimal desired attitude trajectory was approximated using piecewise polynomials of degree 7 and 1, the coefficients of which were stored onboard. Although, the SPR controller gain $k_R$ is a scaler in the theorem, for practical implementation it can be chosen to be a positive diagonal matrix to get the desired handling quality about different axes. It can be chosen based on the desired frequency characteristic of the linearized stability matrix $S(I)$ \eqref{eq:linerr_dym1}.

\subsection{Experimental Results}
For the purpose of validation, two flips of 180 and 360 degree each about roll and pitch axes were performed. As the vehicle is capable of generating negative thrust, it can hover in the flipped upside down configuration, thus making this a safe maneuver. In order to reduce the height loss during flip, the collective (thrust) input was specified as $\theta_0 = \theta_{0(hov)}Z_b\cdot Z_e$ (see Fig \ref{fig:heli_model}). The trim values of the state and control input were obtained for the upright configuration experimentally and determined for the inverted configuration by symmetry. The total flip duration $T_f$ was specified based on actuator saturation and safety limits of the vehicle. For the 180 deg flip, it was chosen to be 1.2 second and 2.3 second for 360 degree flip. Although multiple flips could be performed with the proposed controller, the total maneuver angle is limited by the associated height loss.

The experimental data collected from the maneuvers are shown in Fig. \ref{fig:exp_plot}. The roll and pitch maneuvers are represented in the plot respectively by 321 and 312 Euler angles to avoid singularity. It is observed that the commanded angles for both the axes are tracked almost perfectly. The maximum cyclic input for roll ($\theta_{b}$) and pitch ($\theta_a$) were limited to 10.5 deg. It is evident from Fig. \ref{fig:exp_plot} that the actuators are almost saturated throughout the maneuvers. The small deviation from perfect tracking can be attributed mainly to the following phenomena not captured by the simple model of helicopter: a) Unsteady aerodynamics b) Higher order flap/lead-lag dynamics c) Servo dynamics. Regardless of the above artifacts, the ability of the controller to track the aggressive flip trajectory demonstrates the robustness of the structure preserving controller and the validity of the minimal helicopter model given by \eqref{eq:rotor-fuse} at the limit of performance of the vehicle.

\begin{figure*} [!t] 
\centering
\subfloat[Roll flip 180 deg \label{fig:exp_r180}]{\includegraphics[height=11cm, width=8cm]{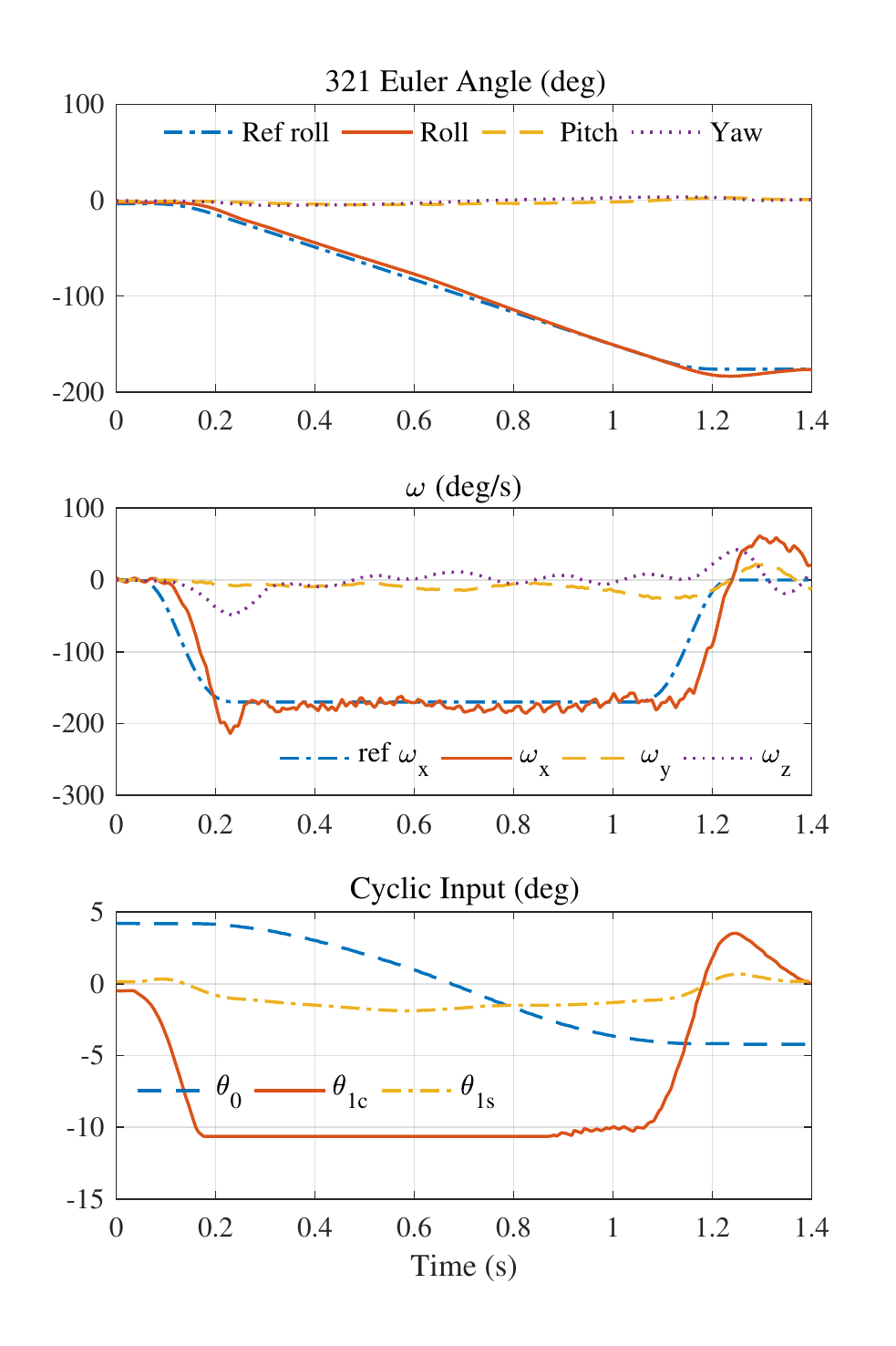}}%
\subfloat[Pitch flip 180 deg \label{fig:exp_p180}]{\includegraphics[height=11cm, width=8cm]{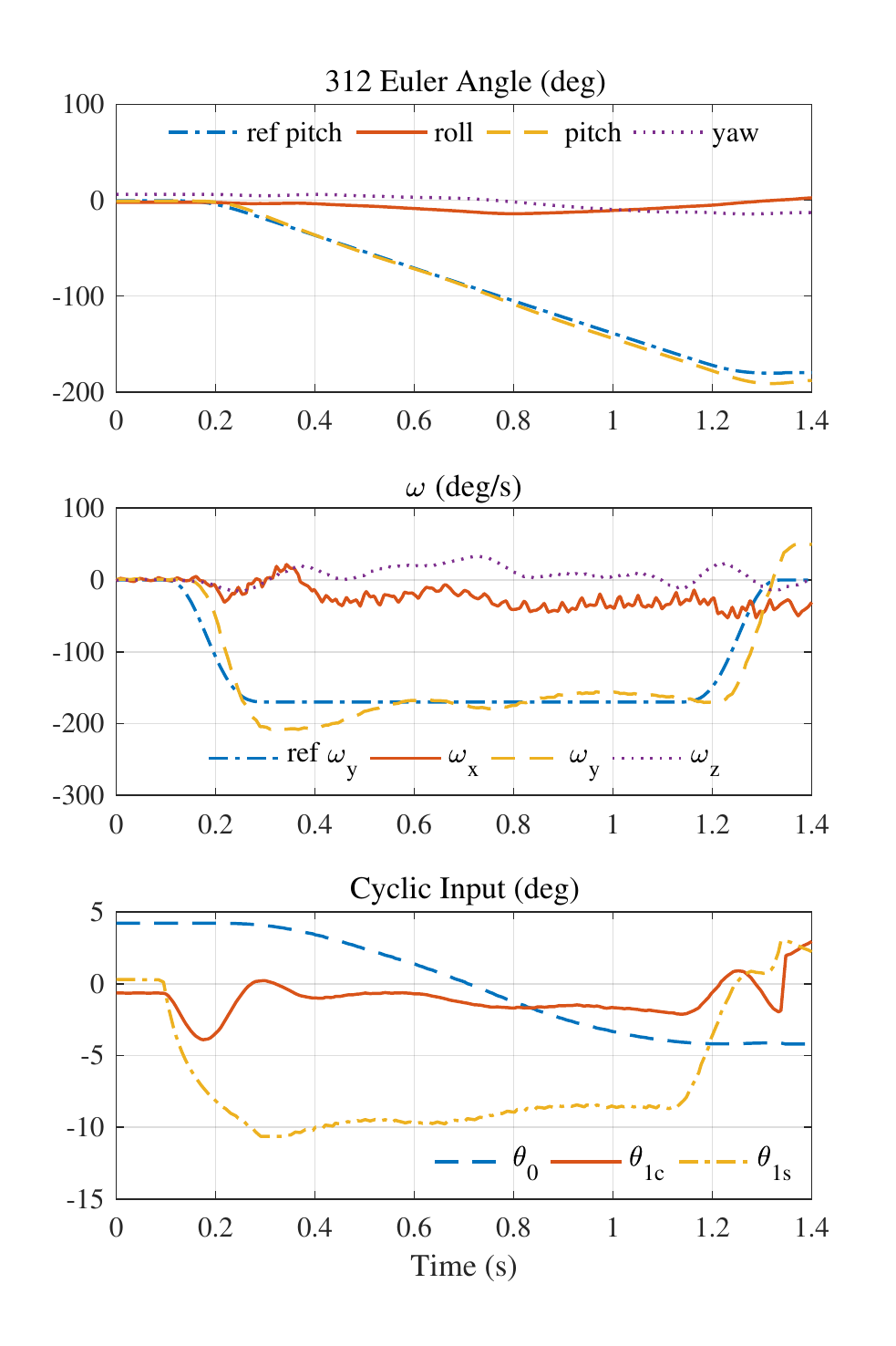}}\\
\subfloat[Roll flip 360 deg \label{fig:exp_r360}]{\includegraphics[height=11cm, width=8cm]{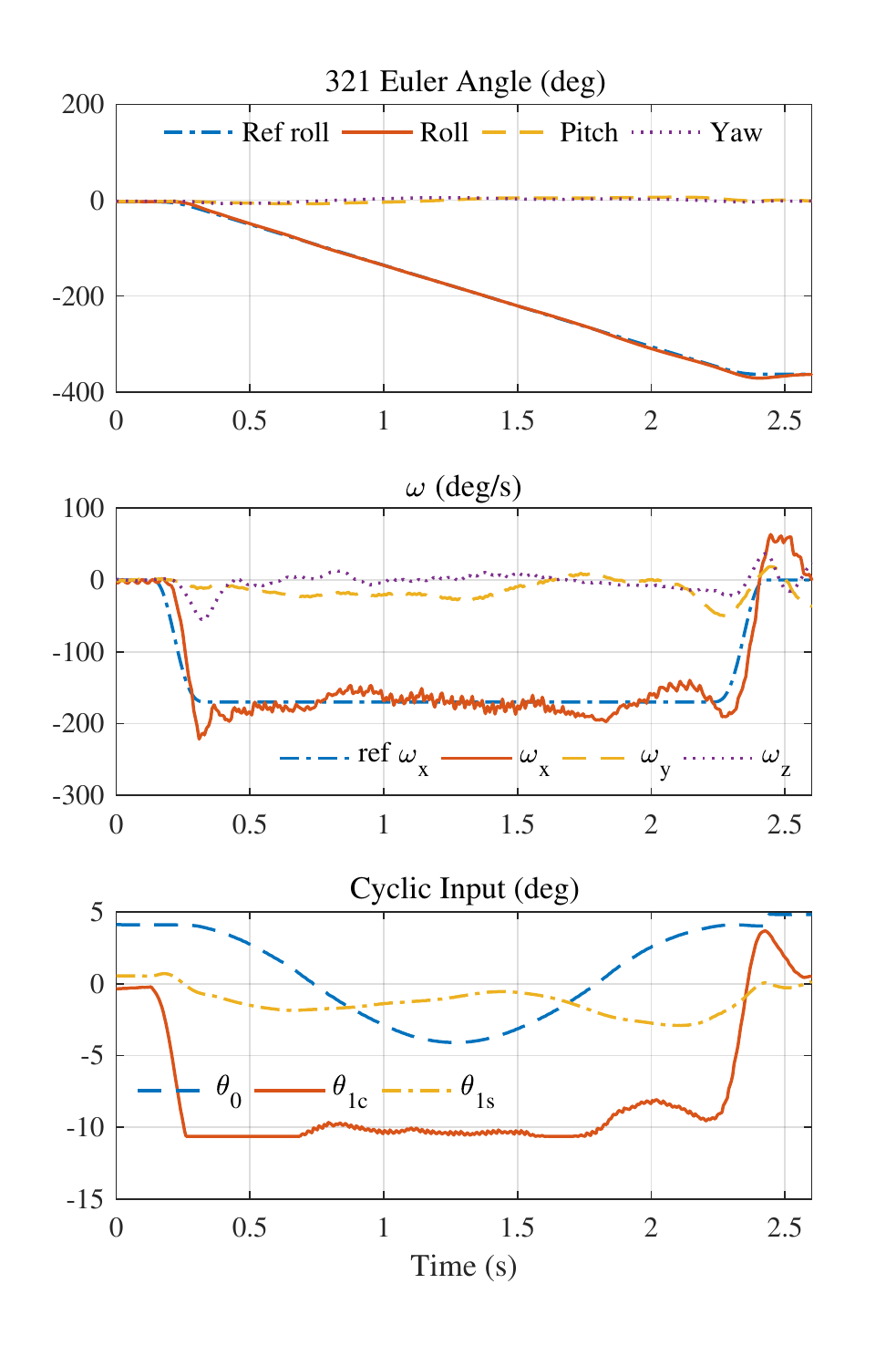}}%
\subfloat[Pitch flip 360 deg \label{fig:exp_p360}]{\includegraphics[height=11cm, width=8cm]{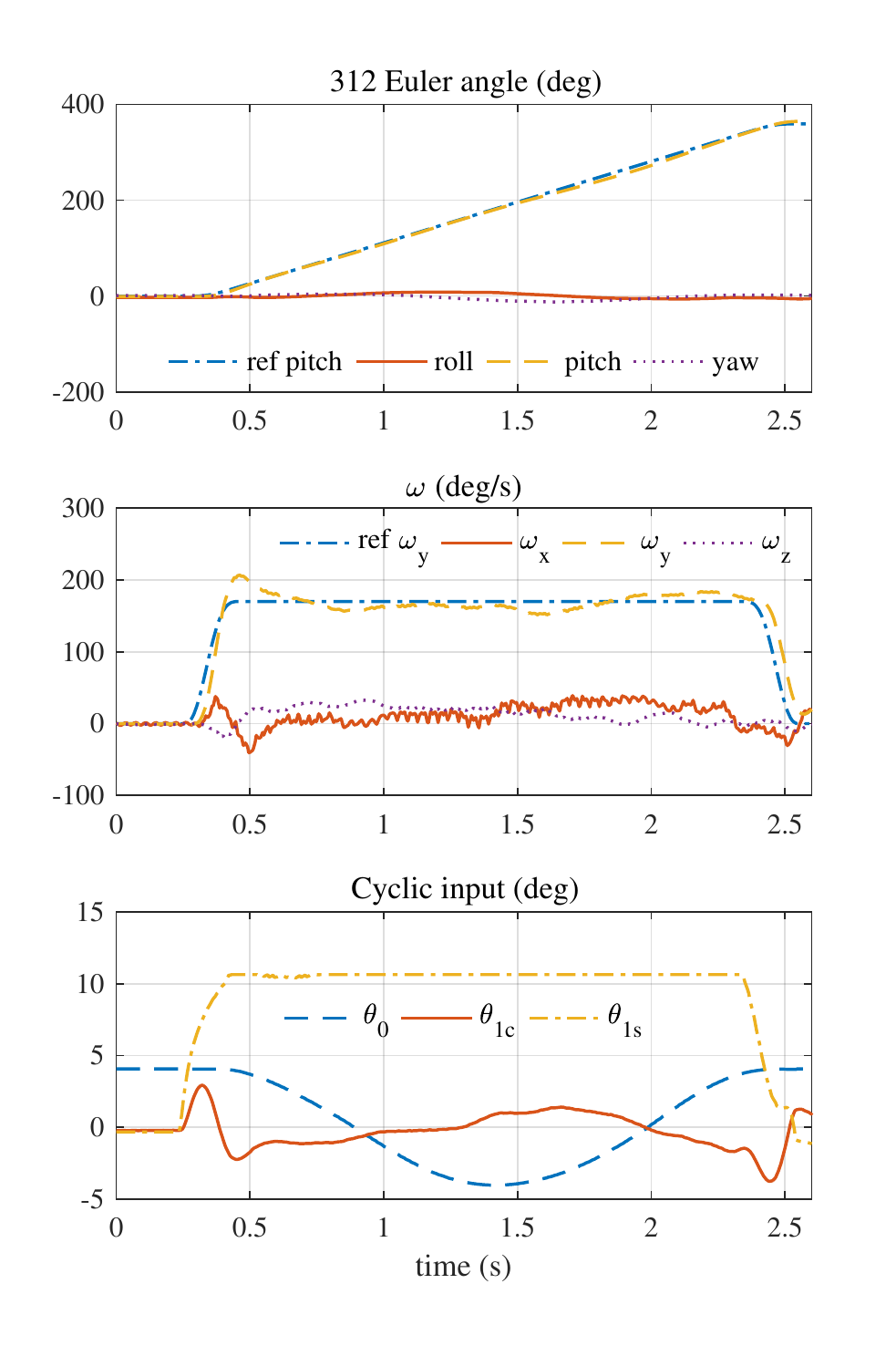}}
\caption{Experimental validation of the structure preserving controller by performing roll and pitch flip.}
\label{fig:exp_plot}
\end{figure*}

\begin{figure*} [!t] 
\centering
\subfloat[Roll flip 180 deg \label{fig:r180_pic}]{\includegraphics[height=11cm, width=8cm]{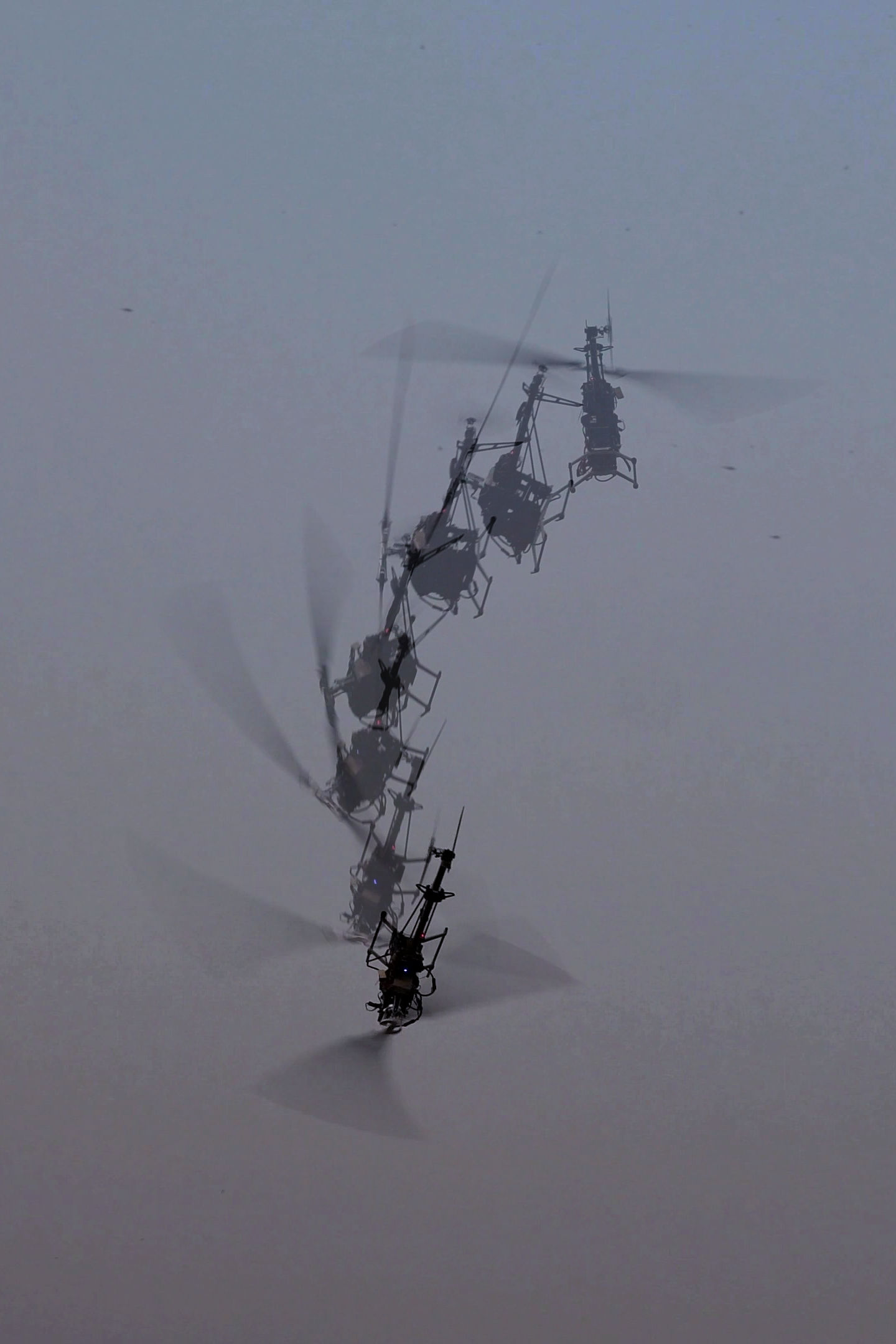}} \qquad
\subfloat[Pitch flip 180 deg \label{fig:p180_pic}]{\includegraphics[height=11cm, width=8cm]{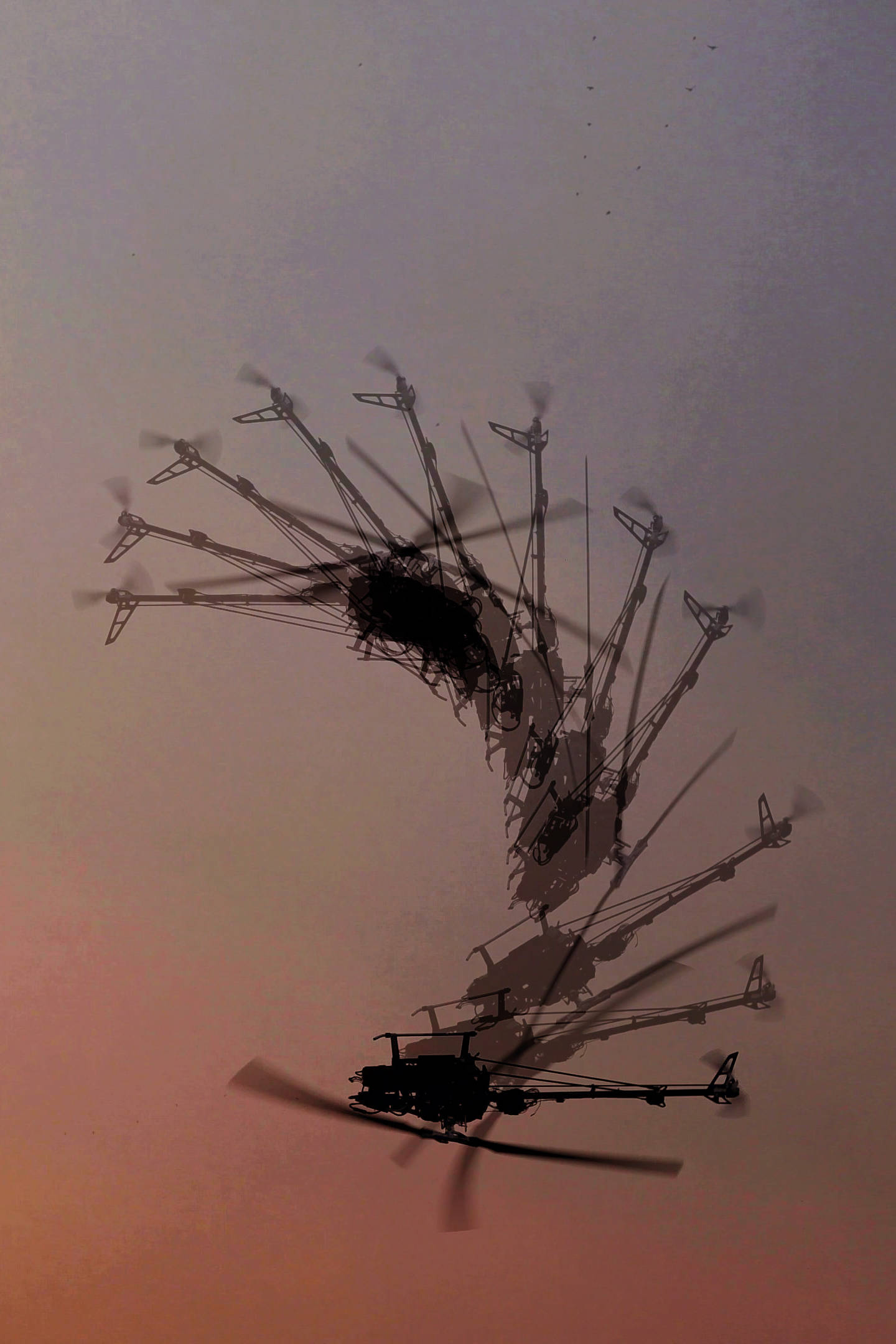}}\\
\caption{Instants during the flip maneuver. Video link: \url{https://youtu.be/1zz71W__RNA}}
\label{fig:flip_pic}
\end{figure*}

%%%%%%%%%%%%%%%%%%%%%%%%%%%%%%%%%%%%%%%%%%%%%%%%%%%%%%%%%%%%%%%%%%%%%%%%%%%%%%%%%%%%%%
%%%%%%%%%%%%%%%%%%%%%%%%%%%%%%%%%%%%%%%%%%%%%%%%%%%%%%%%%%%%%%%%%%%%%%%%%%%%%%%%%%%%%%
%%%%%%%%%%%%%%%%%%%%%%%%%%%%%%%%%%%%%%%%%%%%%%%%%%%%%%%%%%%%%%%%%%%%%%%%%%%%%%%%%%%%%%

\section{Conclusion}

The paper shows the importance of incorporating rotor dynamics in the design of attitude tracking controllers for aerobatic helicopters. It classifies the possible uncertainties associated with the rotor-fuselage model of a helicopter into structured and unstructured disturbances. The proposed BRC controller is robust with respect to both these uncertainties and the tracking error is shown to be ultimately bounded. The ultimate bound of the tracking error can be made arbitrarily small by an appropriate choice of the design parameters $\epsilon_f$ and $\epsilon_r$, being only restricted by control input saturation. The only issue with this controller is its difficulty in implementation due to the need for flap angle feedback. On the other hand, with the knowledge of a few parameters (Table \ref{tab:heli_params}), the easily implementable structure preserving controller can be used to perform aggressive rotational maneuvers at the operational limit of the vehicle, as shown in Sec. \ref{sec:Exp}. On the downside, due to the passive nature of robustness, this controller cannot suppress the error in tracking, arising from parametric uncertainty, to arbitrarily small values. But for all practical purposes the tracking error is guaranteed to be bounded because of the way disturbance enters the system. This controller is also shown to be almost globally asymptotically stable, which is the best that a system defined on a non-Euclidean space can achieve. To the best of the authors' knowledge, this work is the first systematic attempt at designing a globally defined robust attitude tracking controller for an aerobatic helicopter which fully utilizes the rotor dynamics and incorporates the principal uncertainties involved.

%%%%%%%%%%%%%%%%%%%%%%%%%%%%%%%%%%%%%%%%%%%%%%%%%%%%%%%%%%%%%%%%%%%%%%%%%%%%%%%%%%%%%%
%%%%%%%%%%%%%%%%%%%%%%%%%%%%%%%%%%%%%%%%%%%%%%%%%%%%%%%%%%%%%%%%%%%%%%%%%%%%%%%%%%%%%%
%%%%%%%%%%%%%%%%%%%%%%%%%%%%%%%%%%%%%%%%%%%%%%%%%%%%%%%%%%%%%%%%%%%%%%%%%%%%%%%%%%%%%%

%\appendix
%\section{Linearization of error dynamics}  \label{ap:lin}
\appendices
\section{Linearization of error dynamics}  \label{ap:lin}
\begin{proof}
To linearize \eqref{eq:err_dym1_kin}, we first evaluate
\begin{equation}
\frac{d}{d\epsilon} \biggr\rvert_{\epsilon = 0} R(\epsilon) = R_{eq} e^{\epsilon \hat{\bar{\eta}}} \hat{\bar{\eta}} \biggr\rvert_{\epsilon = 0} = R_{eq} \hat{\bar{\eta}}.
\end{equation} 
Linearization of \eqref{eq:err_dym1_kin} is done by taking derivative of LHS and RHS w.r.t $\epsilon$ at $\epsilon = 0$
\begin{equation}
\frac{d}{dt} \frac{d}{d\epsilon} \biggr\rvert_{\epsilon = 0} R(\epsilon) = \frac{d}{d\epsilon} \biggr\rvert_{\epsilon = 0} R(\epsilon) \epsilon \hat{\bar{\omega}}
\end{equation}
which leads to 
\begin{equation}
R_{eq} \dot{\hat{\bar{\eta}}} = R_{eq} \hat{\bar{\omega}} \implies \dot{\bar{\eta}} = \bar{\omega}.
\end{equation}
Using the fact $(Q-Q^T)^\vee = \sum_{i=1}^3 e_i \times Q e_i$ for all $Q \in \mathbb{R}^{3 \times 3}$, $e_{Rm} = \frac{1}{2} \sum_{i=1}^3 e_i \times PR(\epsilon) e_i$. Thus
\begin{equation}
\begin{split}
\frac{d}{d\epsilon} \biggr\rvert_{\epsilon = 0} e_{Rm} = \frac{1}{2}\sum_{i=1}^3 \hat{e}_i P R_{eq} \hat{\bar{\eta}} e_i = -\frac{1}{2}\sum_{i=1}^3 \hat{e}_i P R_{eq} \hat{e}_i \bar{\eta} \\
 = B(R_{eq})\bar{\eta}.
\end{split}
\end{equation}
Using the above relation a similar procedure is followed to linearize \eqref{eq:err_dym1_fus} and \eqref{eq:err_dym1_rot}.
\end{proof}

\nocite{*}% Show all bib entries - both cited and uncited; comment this line to view only cited bib entries;
\IEEEtriggeratref{23}
\bibliographystyle{IEEEtran}
\bibliography{ref}%

\end{document}